\newcommand\ie{\textit{i.e.}\xspace}
\newcommand\eg{\textit{e.g.}\xspace}
\newcommand\lhs{l.h.s.\xspace}
\newcommand\rhs{r.h.s.\xspace}
\newcommand\cat[1]{\ensuremath{\mathbf{#1}}}
\newcommand\Set{\ensuremath{\cat{Set}}}
\newcommand\y{\ensuremath{\mathbf{y}}}
\newcommand\GM{\ensuremath{\hat{\cat{C}}}}
\newcommand\GI{\ensuremath{\GM_\mathcal{M}}}
\newcommand\C{\cat{C}}
\newcommand\I{\cat{I}}
\newcommand\Gam{\mathbf{\Gamma}}
\newcommand\Hom[3]{\mathrm{Hom}_{#1}(#2, #3)}
\newcommand\D{\mathrm{D}}
\renewcommand\L{\mathrm{L}}
\newcommand\R{\mathrm{R}}
\newcommand\U{\mathrm{U}}
\newcommand\Colim{{\rm Colim}}
\newcommand\Proj{{\rm Proj}}
\newcommand\mto{\hookrightarrow}
\newcommand\mfrom{\hookleftarrow}
\renewcommand\to{\rightarrow}
\newcommand\from{\leftarrow}
\newcommand{\nodegraph}[2]{
    \draw[color=gray, rounded corners=3pt] (#1+0.25,#2+0.25) rectangle ++(1-0.5,1-0.5);
    \node[fill,circle,scale=0.3] (d0) at (#1 + 0.5, #2 + 0.5) {};
}
\newcommand{\nodegraphpp}[2]{
    \draw[color=gray, rounded corners=3pt] (#1+0.25,#2+0.25) rectangle ++(1-0.5,1-0.5); \node[fill,circle,scale=0.3] (dpp0) at (#1 + 0.5, #2 + 0.5) {};
}
\newcommand{\edgegraph}[2]{
    \draw[color=gray, rounded corners=5pt] (#1+0.25,#2+0.25) rectangle ++(2-0.5,1-0.5);
    \node[fill,circle,scale=0.3] (e0) at (#1+0.5,#2+0.5) {};
    \node (me0) at (#1+1,#2+0.5) {};
    \node[fill,circle,scale=0.3] (e1) at (#1+1.5,#2+0.5) {};
    \draw [shorten >=3pt,shorten <=3pt, ->] (e0) edge (e1);
}
\newcommand{\eedgegraph}[2]{
    \draw[color=gray, rounded corners=5pt] (#1+0.25,#2+0.25) rectangle ++(2-0.5,1-0.5);
    \node[fill,circle,scale=0.3] (ee0) at (#1+0.5,#2+0.5) {};
    \node[fill,circle,scale=0.3] (ee01) at (#1+1,#2+0.5) {};
    \node[fill,circle,scale=0.3] (ee1) at (#1+1.5,#2+0.5) {};
    \draw [shorten >=1pt,shorten <=1pt, ->] (ee0) edge (ee01);
    \draw [shorten >=1pt,shorten <=1pt, ->] (ee01) edge (ee1);
}
\newcommand{\eedgegraphp}[2]{
    \draw[color=gray, rounded corners=5pt] (#1+0.25,#2+0.25) rectangle ++(2-0.5,1-0.5);
    \node[fill,circle,scale=0.3] (eep0) at (#1+0.5,#2+0.5) {};
    \node[fill,circle,scale=0.3] (eep01) at (#1+1,#2+0.5) {};
    \node[fill,circle,scale=0.3] (eep1) at (#1+1.5,#2+0.5) {};
    \draw [shorten >=1pt,shorten <=1pt, ->] (eep0) edge (eep01);
    \draw [shorten >=1pt,shorten <=1pt, ->] (eep01) edge (eep1);
}
\newcommand{\emedgegraph}[2]{
    \draw[color=gray, rounded corners=5pt] (#1+0.25,#2+0.25) rectangle ++(2-0.5,1-0.5);
    \node[fill,circle,scale=0.3] (eme0) at (#1+0.5,#2+0.5) {};
    \node[fill,circle,scale=0.3] (eme01) at (#1+1,#2+0.5) {};
    \node[fill,circle,scale=0.3] (eme1) at (#1+1.5,#2+0.5) {};
    \draw [shorten >=1pt,shorten <=1pt, ->] (eme0) edge (eme01);
    \draw [shorten >=1pt,shorten <=1pt, ->] (eme1) edge (eme01);
}
\newcommand{\emedgegraphp}[2]{
    \draw[color=gray, rounded corners=5pt] (#1+0.25,#2+0.25) rectangle ++(2-0.5,1-0.5);
    \node[fill,circle,scale=0.3] (emep0) at (#1+0.5,#2+0.5) {};
    \node[fill,circle,scale=0.3] (emep01) at (#1+1,#2+0.5) {};
    \node[fill,circle,scale=0.3] (emep1) at (#1+1.5,#2+0.5) {};
    \draw [shorten >=1pt,shorten <=1pt, ->] (emep0) edge (emep01);
    \draw [shorten >=1pt,shorten <=1pt, ->] (emep1) edge (emep01);
}
\newcommand{\ebedgegraph}[2]{
    \draw[color=gray, rounded corners=5pt] (#1+0.25,#2+0.25) rectangle ++(2-0.5,1-0.5);
    \node[fill,circle,scale=0.3] (ebe0) at (#1+0.5,#2+0.5) {};
    \node[fill,circle,scale=0.3] (ebe01) at (#1+1,#2+0.5) {};
    \node[fill,circle,scale=0.3] (ebe1) at (#1+1.5,#2+0.5) {};
    \draw [shorten >=1pt,shorten <=1pt, ->] (ebe01) edge (ebe0);
    \draw [shorten >=1pt,shorten <=1pt, ->] (ebe01) edge (ebe1);
}
\newcommand{\ebedgegraphp}[2]{
    \draw[color=gray, rounded corners=5pt] (#1+0.25,#2+0.25) rectangle ++(2-0.5,1-0.5);
    \node[fill,circle,scale=0.3] (ebep0) at (#1+0.5,#2+0.5) {};
    \node[fill,circle,scale=0.3] (ebep01) at (#1+1,#2+0.5) {};
    \node[fill,circle,scale=0.3] (ebep1) at (#1+1.5,#2+0.5) {};
    \draw [shorten >=1pt,shorten <=1pt, ->] (ebep01) edge (ebep0);
    \draw [shorten >=1pt,shorten <=1pt, ->] (ebep01) edge (ebep1);
}
\newcommand{\nodegraphp}[2]{
    \draw[color=gray, rounded corners=3pt] (#1+0.25,#2+0.25) rectangle ++(1-0.5,1-0.5);    \node[fill,circle,scale=0.3] (dp0) at (#1 + 0.5, #2 + 0.5) {};
}
\newcommand{\edgegraphp}[2]{
    \draw[color=gray, rounded corners=5pt] (#1+0.25,#2+0.25) rectangle ++(2-0.5,1-0.5);
    \node[fill,circle,scale=0.3] (ep0) at (#1+0.5,#2+0.5) {};
    \node (mep0) at (#1+1,#2+0.5) {};
    \node[fill,circle,scale=0.3] (ep1) at (#1+1.5,#2+0.5) {};
    \draw [shorten >=3pt,shorten <=3pt, ->] (ep0) edge (ep1);
}
\newcommand{\edgegraphpp}[2]{
    \draw[color=gray, rounded corners=5pt] (#1+0.25,#2+0.25) rectangle ++(2-0.5,1-0.5);
    \node[fill,circle,scale=0.3] (epp0) at (#1+0.5,#2+0.5) {};
    \node (mepp0) at (#1+1,#2+0.5) {};
    \node[fill,circle,scale=0.3] (epp1) at (#1+1.5,#2+0.5) {};
    \draw [shorten >=3pt,shorten <=3pt, ->] (epp0) edge (epp1);
}
\newcommand{\paredgegraph}[2]{
    \draw[color=gray, rounded corners=5pt] (#1+0.25,#2+0.25) rectangle ++(2-0.5,1-0.5);
    \node[fill,circle,scale=0.3] (ee0) at (#1+0.5,#2+0.5) {};
    \node (mee0) at (#1+1,#2+0.5) {};
    \node[fill,circle,scale=0.3] (ee1) at (#1+1.5,#2+0.5) {};
    \draw [shorten >=3pt,bend left=20,shorten <=3pt, ->] (ee0) edge (ee1);
    \draw [shorten >=3pt,bend right=20,shorten <=3pt, ->] (ee0) edge (ee1);
}
\newcommand{\paredgegraphp}[2]{
    \draw[color=gray, rounded corners=5pt] (#1+0.25,#2+0.25) rectangle ++(2-0.5,1-0.5);
    \node[fill,circle,scale=0.3] (eep0) at (#1+0.5,#2+0.5) {};
    \node (meep0) at (#1+1,#2+0.5) {};
    \node[fill,circle,scale=0.3] (eep1) at (#1+1.5,#2+0.5) {};
    \draw [shorten >=3pt,bend left=20,shorten <=3pt, ->] (eep0) edge (eep1);
    \draw [shorten >=3pt,bend right=20,shorten <=3pt, ->] (eep0) edge (eep1);
}
\newcommand{\loopedgegraph}[2]{
    \draw[color=gray, rounded corners=5pt] (#1+0.25,#2+0.25) rectangle ++(2-0.5,1-0.5);
    \node[fill,circle,scale=0.3] (le0) at (#1+0.5,#2+0.5) {};
    \node (mle0) at (#1+1,#2+0.5) {};
    \node[fill,circle,scale=0.3] (le1) at (#1+1.5,#2+0.5) {};
    \draw [shorten >=3pt,bend left=20,shorten <=3pt, ->] (le0) edge (le1);
    \draw [shorten >=3pt,bend left=20,shorten <=3pt, ->] (le1) edge (le0);
}
\newcommand{\loopedgegraphp}[2]{
    \draw[color=gray, rounded corners=5pt] (#1+0.25,#2+0.25) rectangle ++(2-0.5,1-0.5);
    \node[fill,circle,scale=0.3] (lep0) at (#1+0.5,#2+0.5) {};
    \node (mlep0) at (#1+1,#2+0.5) {};
    \node[fill,circle,scale=0.3] (lep1) at (#1+1.5,#2+0.5) {};
    \draw [shorten >=3pt,bend left=20,shorten <=3pt, ->] (lep0) edge (lep1);
    \draw [shorten >=3pt,bend left=20,shorten <=3pt, ->] (lep1) edge (lep0);
}
    \definecolor{junglegreen}{rgb}{0.16, 0.67, 0.53}
\begin{document}

\title{Accretive Computation of Global Transformations\\
{\small Extended Version}}

\titlerunning{Accretive Computation of Global Transformations - Extended Version}

\author{Alexandre Fernandez \and Luidnel Maignan \and Antoine Spicher}

\authorrunning{A. Fernandez et al.}

\institute{Univ Paris Est Creteil, LACL, 94000, Creteil, France\\\email{firstname.lastname@u-pec.fr}}

\maketitle

\begin{abstract}
Global transformations form a categorical framework adapting graph transformations to describe fully synchronous rule systems on a given data structure.
In this work we focus on data structures that can be captured as presheaves and study the computational aspects of such synchronous rule systems.
To obtain an online algorithm, a complete study of the sub-steps within each synchronous step is done at the semantic level.
This leads to the definition of accretive rule systems and a local criterion to characterize these systems.
Finally an online computation algorithm for theses systems is given.
\keywords{Global Transformation \and Synchronous Rule Application \and Rewriting System \and Online Algorithm\and Category Theory.}
\end{abstract}

\newtheoremrep{proposition}[theorem]{Proposition}
\newtheoremrep{lemma}[theorem]{Lemma}
\newtheoremrep{theorm}[theorem]{Theorem}

\section{Introduction}

Classically, a graph rewriting system consists of a set of rewriting rules $l \Rightarrow r$ expressing that $l$ should be replaced by $r$ somewhere in an input graph.
Usually rules are applied one after the other in a non-deterministic way~\cite{corradini2006sesqui,ehrig2006graph,ehrig1973graph}.
Allowing multiple rules to be applied simultaneously has been the subject of multiple studies, leading to the concepts of parallel rule applications, concurrent rule applications~\cite{ehrig1980parallelism}, and amalgamation of rules~\cite{boehm1987amalgamation}.
For instance, amalgamation of rules is considered when two rules $l \Rightarrow r$ and $l' \Rightarrow r'$ are applicable but $l$ and $l'$ overlap.
Basically, the behavior on the overlap is given by a third rule specifying how $r$ and $r'$ should consequently overlap.
But some systems do not only require the amalgamation of a few, finite, number of rule applications, but the amalgamation of an unbounded number, the whole input being transformed.
A simple example is triangular mesh refinement where the triangles of a mesh are all subdivided into many smaller triangles simultaneously, with a coherent behavior on the overlap between triangles~\cite{maignan2015global}.
In this extreme case, the notion of replacement is not appropriate, no part of the initial mesh is really kept identical.

Rethinking rewriting for those particular systems where the transformation is globally coherent leads to a generic and economical mathematical structure captured easily with categorical concepts, the so-called \emph{global transformations}~\cite{maignan2015global}.
This point of view has been applied mathematically to examples like mesh refinements on abstract cell complexes~\cite{maignan2015global}, but also deterministic Lindenmayer systems acting on formal words~\cite{fernandez2019lindenmayer}, and cellular automata acting on labeled Caley graphs~\cite{DBLP:conf/automata/FernandezMS21}.
In the present work, we tackle global transformations in an algorithmic perspective and show how they can be computed in an online fashion when transforming graphs, but also any generalization of graphs suitably captured by categories of presheaves (labeled graphs, higher-dimensional graphs, etc.).
This online strategy saves memory during the computation, more memory being also saved through a condition allowing the modifications to happen \emph{in place}: accretiveness.

The article is organized as follows.
After adapting in Section~\ref{sec:backgrounds} the definition of global transformations to presheaves, 
Section~\ref{sec:online} unfolds all implications of the online and accretive perspective at the semantic level, and gathers all necessary formal results.
This leads to the presentation of the algorithm in Section~\ref{sec:algo}, followed by a discussion in Section~\ref{sec:conclusion}.
In the present version, facts are only stated.

\section{Background on Global Transformations}
\label{sec:backgrounds}

In the section, we adapt the definitions of global transformations given in~\cite{fernandez2019lindenmayer,maignan2015global} to fit with the context of presheaves and monomorphisms between them.
The reader is assumed to be familiar with the definitions of categories, functors, monomorphisms, comma categories, diagrams, cocones, colimits and categories of presheaves.
Refer to~\cite{mac2013categories} for details.
These constructions are also pedagogically introduced in the context of global transformation in~\cite{fernandez2019lindenmayer}.

In the following, we consider an arbitrary category $\C$ and write $\GM$ for the category $\Set^{\C^\mathbf{op}}$ of all presheaves on $\C$, $\GI$ for the subcategory restricting morphisms to monomorphisms, and $\U : \GI \to \GM$ for the obvious forgetful functor.
Morphisms of $\GI$ and monomorphisms of $\GM$ are written $p \mto p'$.
We write $\y: \C \to \GM$ for the Yoneda embedding, and call \emph{representable presheaves} the image $\y{c}$ of any $c \in \C$.

The examples are spelled out with $\C$ set to the category with two objects $\mathtt{v}$ and $\mathtt{e}$, and two morphisms $\mathtt{s}, \mathtt{t} : \mathtt{v} \to \mathtt{e}$.
A presheaf $p \in \GM$ is then a directed multigraph with self-loops: $p(\mathtt{v})$ and $p(\mathtt{e})$ are respectively the sets of vertices and edges composing the graph, and $p(\mathtt{s})$ (resp. $p(\mathtt{t})$) is a function mapping each edge to its source (resp. target).
The representable presheaves are the graph $\y{\mathtt{v}}$ with a single vertex and the graph $\y{\mathtt{e}}$ with two vertices and a single edge.
We will make use of the following particular graphs: $d_k$ the discrete graph with $k$ vertices and no edge, $p_k$ the path of length $k$, and $c_k$ the cycle of length $k$, $k > 0$.

The category $\GM$ is cocomplete and for any diagram $D : \I \to \GM$, the colimit $C$ of $D$ is directly written $\Colim(D)$; $C$ also abusively designates the apex and $C_i: D(i) \to C$ the cocone components for any $i \in \I$.
The usual description of colimits in $\GM$ based on equivalence classes of vertices and edges can be rephrased in terms of \emph{zig-zag}.
A \emph{zig-zag} $z$ in a category $\cat{X}$ is given by some natural number $|z|$, a sequence $\langle z_i \rangle_{0 \leq i \leq |z|}$ of $|z|+1$ objects of $\cat{X}$ and a sequence $\langle \overline{z}_i \rangle_{0 \leq i \leq |z|-1}$ of morphisms in $\cat{X}$ of the form $z_0 \to z_1 \from z_2 \to z_3 \cdots z_{|z|}$ or $z_0 \from z_1 \to z_2 \from z_3 \cdots z_{|z|}$.
Given a functor $F: \cat{X} \to \cat{Y}$ and two morphisms $g : c \to F(z_0)$ and $g' : c \to F(z_{|z|})$ in $\cat{Y}$, we write $F(z)$ for the zig-zag defined with $|F(z)| = |z|$, $F(z)_i = F(z_i)$, and $\overline{F(z)}_i = F(\overline{z}_i)$.
Given two morphisms $f_0 : c \to z_0$ and $f_{|z|} : c \to z_{|z|}$ in $\cat{Y}$, $z$ is said to \emph{link $f_0$ and $f_{|z|}$} if there is a sequence $\langle f_i : c \to z_i \rangle_{1\le i\le|z|-1}$ of morphisms such that, for any $i \in \{0,\ldots,|z|-1\}$, $f_i = \overline{z}_i \circ f_{i+1}$ or $\overline{z}_i \circ f_i = f_{i+1}$ depending on the direction of $\overline{z}_i$.
We say that \emph{$z$ links $g$ and $g'$ through F} if the zig-zag $F(z)$ links $g$ and $g'$.

\begin{proposition}\label{prop:colim-graphi}
For any diagram $D : \I \to \GM$ of small domain $\I$ with $C = \Colim(D)$, any representable presheaf $\y{c}$ and any $x : \y{c} \to C$, there is at least one pair of $\langle i \in \I, y : \y{c} \to D(i)\rangle$ such that $x = C_i \circ y$, and any two such pairs $\langle i, y \rangle$ and $\langle i', y' \rangle$ have a zig-zag $z$ in $\I$ that links $y$ and $y'$ through $D$.

\begin{center}
\begin{tikzpicture}[yscale=0.6,xscale=0.8]
    \node (c) at (3,1.7) {$C$};
    \node (z0) at (0,0) {$D(i)$};
    \node (z1) at (2,-0) {$D(z_1)$};
    \node (z2) at (4,0) {$\ldots$};
    \node (zz) at (6,-0) {$D(i')$};
    \node (a) at (3,-1.7) {$\y{c}$};

    \node (c2) at (-2,1.7) {$C$};
    \node (a2) at (-2,-1.7) {$\y{c}$};

    \draw[->] (z1) to  (z0);
    \draw[->] (z1) to  (z2);
    \draw[->] (zz) to  (z2);
    \draw[left] (a) to node[below left,font=\small]{$y$} (z0);
    \draw[right] (a) to node[below right,font=\small]{$y'$} (zz);
    \draw[->] (z0) to  node[above left,font=\small]{$C_i$} (c);
    \draw[->] (zz) to node[above right,font=\small]{$C_{i'}$} (c);

    \draw[right] (a2) to node[left]{$x$} (c2);
    \draw[double, double distance=.5mm] (a) to (a2);
    \draw[double, double distance=.5mm] (c) to (c2);

\end{tikzpicture}
\end{center}

\end{proposition}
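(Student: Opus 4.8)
The plan is to reduce the statement to the classical description of colimits of presheaves computed pointwise in $\Set$, and then transport that description back along the Yoneda embedding. Recall that colimits in $\GM = \Set^{\C^{\mathbf{op}}}$ are computed objectwise, so for each $c \in \C$ the set $C(c)$ is the colimit in $\Set$ of the diagram $D(-)(c) : \I \to \Set$. Concretely, $C(c)$ is the quotient of the disjoint union $\coprod_{i \in \I} D(i)(c)$ by the equivalence relation generated by identifying $s \in D(i)(c)$ with $D(\overline{z}_0)(s) \in D(j)(c)$ for every arrow $\overline{z}_0 : i \to j$ (or $j \to i$) of $\I$; the cocone component $C_i(c)$ sends $s$ to its equivalence class.

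The first step is to invoke the Yoneda lemma to replace morphisms $\y{c} \to P$ by elements of $P(c)$ naturally in $P$: a morphism $x : \y{c} \to C$ corresponds to an element $\tilde{x} \in C(c)$, a morphism $y : \y{c} \to D(i)$ corresponds to $\tilde{y} \in D(i)(c)$, and the condition $x = C_i \circ y$ translates exactly to $\tilde{x} = C_i(c)(\tilde{y})$, i.e. $\tilde{y}$ is a representative of the class $\tilde{x}$. Existence of at least one pair $\langle i, y\rangle$ is then immediate: since $\tilde{x} \in C(c)$ is an equivalence class in the quotient above, it is nonempty, so it has some representative $\tilde{y} \in D(i)(c)$ for some $i$, and the corresponding $y : \y{c} \to D(i)$ works.

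The second step handles uniqueness-up-to-zig-zag. Two pairs $\langle i, y\rangle$ and $\langle i', y'\rangle$ with $x = C_i \circ y = C_{i'} \circ y'$ correspond to two representatives $\tilde{y} \in D(i)(c)$ and $\tilde{y}' \in D(i')(c)$ of the same class in $C(c)$. By the standard construction of the generated equivalence relation, $\tilde{y}$ and $\tilde{y}'$ being equivalent means there is a finite chain $\tilde{y} = s_0, s_1, \dots, s_n = \tilde{y}'$ with $s_k \in D(i_k)(c)$, $i_0 = i$, $i_n = i'$, and for each $k$ an arrow $\overline{z}_k$ of $\I$ between $i_k$ and $i_{k+1}$ realizing $s_k \mapsto s_{k+1}$ in the appropriate direction. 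This chain is precisely a zig-zag $z$ in $\I$ with $z_k = i_k$ and morphisms $\overline{z}_k$; translating the elements $s_k \in D(i_k)(c)$ back through Yoneda into morphisms $f_k : \y{c} \to D(i_k) = D(z_k)$ gives the sequence required in the definition of ``$z$ links $y$ and $y'$ through $D$'', the compatibility conditions $f_k = \overline{z}_k \circ f_{k+1}$ or $\overline{z}_k \circ f_k = f_{k+1}$ being exactly the relations $s_k \mapsto s_{k+1}$ read under Yoneda and functoriality of $D(-)(c)$.

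The main obstacle is bookkeeping rather than conceptual: one has to state carefully the explicit pointwise colimit formula for presheaves of sets and the explicit description of the generated equivalence relation as zig-zag chains, and then check that the direction conventions for $\overline{z}_k$ in $\I$ match the conventions in the paper's definition of ``linking'' (in particular that a backward arrow $j \to i_k$ in $\I$ used to relate $s_k$ and $s_{k+1}$ produces a morphism going the matching way in the zig-zag $D(z)$). A minor additional point is to note that smallness of $\I$ guarantees the disjoint union and the colimit exist as genuine sets, so the chains are well-defined finite objects; this is where the hypothesis that $\I$ is small is used.
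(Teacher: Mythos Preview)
Your proof is correct and follows essentially the same approach as the paper: compute the colimit pointwise, describe the $\Set$-colimit at $c$ as equivalence classes (connected components of the category of elements) related by zig-zags, and transport along the Yoneda lemma. Your write-up is simply a more detailed unfolding of the same three ingredients.
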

\begin{proof}
    Colimits in categories of presheaves are computed pointwise, and colimits in $\cat{Set}$ can be described as the connected components in the category of elements of the diagram, such connected components being such sets of pairs $\langle i, x \in F(i) \rangle$ connected by zig-zags.
    Moreover, the existence of such a pair is ensured since the colimit cocone is jointly epimorphic.
    Combined with the fact that, for any presheaf $p$, elements in the set $p(c)$ are in bijection with morphisms $\y{c} \to p$ by the Yoneda lemma, we obtain this formulation.
\qed
\end{proof}

Given two categories $A$ and $B$, a functor $F : A \to B$, and an object $b$ in $B$, the comma category $F/b$ sees its objects described as pairs $\langle a \in A, f : F(a) \to b \rangle$ and its morphisms from $\langle a, f \rangle$ to $\langle a', f' \rangle$ as pairs $\langle e : a \to a', f' \rangle$ such that $f = f' \circ F(e)$.
The composition of $\langle e', f'' \rangle \circ \langle e, f' \rangle$ is therefore $\langle e' \circ e, f'' \rangle$.


\paragraph*{Specification of Global Transformations.}
\begin{figure}[t]
\begin{center}
\input{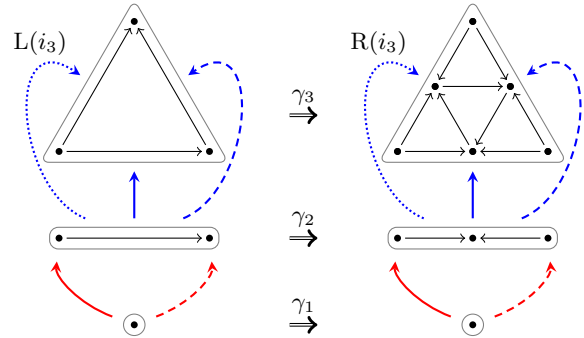}

\begin{tikzpicture}[scale=0.5]
    \begin{scope}[yscale=2/6*1.73]
    \tri{-1}{0}{0}{0.5}{1};
    \draw [color=gray, rounded corners=5*0.66] (-1.5,-0.5) -- (3.5,-0.5) -- (1,7) --cycle;

    \edge{-1}{-4}{0.5}{1};
    \draw [color=gray, rounded corners=5*0.66] (-1.25,-3.5) -- (3.25,-3.5) -- (3.25,-4.5) -- (-1.25,-4.5) --cycle;

    \nod{1}{-8}{0.5}{1};
    \draw [color=gray] (n0) circle (1/3.5 and 1.73/3.5);
    
    \path let \p{e0}=(e0), \p{e1}=(e1), \p{t0}=(t0), \p{t1}=(t1) in
    (\x{e0}/2+\x{e1}/2, \y{e0}) edge[>=stealth, color=blue,thick, shorten >=10*0.75,shorten <=10*0.75, ->] (\x{t0}/2+\x{t1}/2, \y{t0});
    
    \path let \p{e0}=(e0), \p{e1}=(e1), \p{t0}=(t1), \p{t1}=(t2) in
    (\x{e0}/2+\x{e1}/2, \y{e0}/2+\y{e1}/2) edge[>=stealth, color=blue, densely dashed, thick, shorten >=10*0.75,shorten <=20,out=35,in=50,looseness=1.4, ->] (\x{t0}/2+\x{t1}/2, \y{t0}/2+\y{t1}/2);
    
    \path let \p{e0}=(e0), \p{e1}=(e1), \p{t0}=(t2), \p{t1}=(t0) in
    (\x{e0}/2+\x{e1}/2, \y{e0}/2+\y{e1}/2) edge[>=stealth, color=blue, densely dotted, thick, shorten >=10*0.75,shorten <=20, out=145,in=130,looseness=1.4, ->] (\x{t0}/2+\x{t1}/2, \y{t0}/2+\y{t1}/2);
    \node at (-1.5, 5) {$\L(i_3)$};

    \draw [>=stealth, color=red, thick, shorten >=10*0.75,shorten <=10*0.75, ->] (n0) edge[bend left] (e0);
    \draw [>=stealth, color=red, densely dashed, thick, shorten >=10*0.75,shorten <=10*0.75, ->] (n0) edge[bend right] (e1);
    
    \ttri{8}{0}{0.5}{1};
    \draw [>=stealth, color=gray, rounded corners=5*0.66] (7.5,-0.5) -- (12.5,-0.5) -- (10,7) --cycle;

    \eedge{8}{-4}{0.5}{1};
    \draw [>=stealth, color=gray, rounded corners=5*0.66] (7.75,-3.5) -- (12.25,-3.5) -- (12.25,-4.5) -- (7.75,-4.5) --cycle;

    \nod{10}{-8}{0.5}{1};
    \draw [>=stealth, color=gray] (n0) circle (1/3.5 and 1.73/3.5);

    \path let \p{e0}=(e0), \p{e1}=(e1), \p{t0}=(t0), \p{t1}=(t1) in
    (\x{e0}/2+\x{e1}/2, \y{e0}) edge[>=stealth, color=blue,thick, shorten >=10*0.75,shorten <=10*0.75, ->] (\x{t0}/2+\x{t1}/2, \y{t0});
    
    \path let \p{e0}=(e0), \p{e1}=(e1), \p{t0}=(t1), \p{t1}=(t2) in
    (\x{e0}/2+\x{e1}/2, \y{e0}/2+\y{e1}/2) edge[>=stealth, color=blue, densely dashed, thick, shorten >=10*0.75,shorten <=20,out=35,in=50,looseness=1.4, ->] (\x{t0}/2+\x{t1}/2, \y{t0}/2+\y{t1}/2);
    
    \path let \p{e0}=(e0), \p{e1}=(e1), \p{t0}=(t2), \p{t1}=(t0) in
    (\x{e0}/2+\x{e1}/2, \y{e0}/2+\y{e1}/2) edge[>=stealth, color=blue, densely dotted, thick, shorten >=10*0.75,shorten <=20, out=145,in=130,looseness=1.4, ->] (\x{t0}/2+\x{t1}/2, \y{t0}/2+\y{t1}/2);
    \node at (7.5, 5) {$\R(i_3)$};

    \draw [>=stealth, color=red,shorten >=10*0.75,shorten <=10*0.75, ->] (n0) edge[thick, bend left] (e0);
    \draw [>=stealth, color=red, densely dashed, shorten >=10*0.75,shorten <=10*0.75, ->] (n0) edge[thick, bend right] (e1);

    \node (lt) at (4.9, 1.67) {};
    \node (rt) at (6.1, 1.67) {};
    \path (lt) edge[double, ->]node[above,yshift=0.1em]{$\gamma_3$} (rt);
    
    \node (le) at (4.9, -4) {};
    \node (re) at (6.1, -4) {};
    \path (le) edge[double, ->]node[above,yshift=0.1em]{$\gamma_2$} (re);

    \node (ln) at (4.9, -8) {};
    \node (rn) at (6.1, -8) {};
    \path (ln) edge[double, ->]node[above,yshift=0.1em]{$\gamma_1$} (rn);
    \end{scope}
\end{tikzpicture}
\caption{Sierpinski rule system: one rule to divide the relevant triangles, the two others and their inclusions to specify the connections in the output based on the connections in the input.}
\label{fig:sierpinskyRules}
\end{center}
\end{figure}
In this paper, we restrict ourselves to global transformations acting on $\GI$.
At a basic level, they are rewriting systems transforming presheaves into presheaves.
As such, their specification is based on a set of rules.
Each rule $\gamma$ is a pair written $l \Rightarrow r$ with $l, r \in \GI$.
Given an input presheaf $p$, it expresses that any occurrence of the left hand side (\lhs) $l$ in $p$ produces the corresponding right hand side (\rhs) $r$ in the associated output.
The main feature of global transformations is to endow this set of rules with a structure of \emph{category} where morphisms describe \emph{inclusions of rules}.
A rule inclusion $i: \gamma_1 \to \gamma_2$ from a \emph{sub-rule} $\gamma_1 = l_1 \Rightarrow r_1$, to a \emph{super-rule} $\gamma_2 = l_2 \Rightarrow r_2$ expresses how an occurrence of $l_1$ in $l_2$ is locally transformed into an occurrence of $r_1$ in $r_2$.
So a rule inclusion $i$ is a pair $\langle i_l:l_1 \to l_2, i_r: r_1 \to r_2 \rangle$.
Formally, such a presentation is captured by a category and two functors.

\begin{definition}\label{def:rule-system}
A \emph{rule system $T$ on $\GI$} is a tuple $\langle \Gam_T, \L_T, \R_T \rangle$ where $\Gam_T$ is a category whose objects are called \emph{rules} and morphisms are called \emph{rule inclusions}, $\L_T: \Gam_T \to \GI$ is a full embedding functor called the \emph{\lhs functor}, and $\R_T: \Gam_T \to \GI$ is a functor called the \emph{\rhs functor}.
The subscript $T$ is omitted when this does not lead to any confusion.
\end{definition}

Figure~\ref{fig:sierpinskyRules} illustrates a global transformation specification for generating a Sierpinski gasket.
The rule system is composed of 3 rules transforming locally vertices ($\gamma_1$), edges ($\gamma_2$) and acyclic triangles ($\gamma_3$).
These rules are related by 5 main rule inclusions: $i_1: \gamma_1 \to \gamma_2$ (plain red), $i_2: \gamma_1 \to \gamma_2$ (dashed red), $i_3: \gamma_2 \to \gamma_3$ (dotted blue), $i_4: \gamma_2 \to \gamma_3$ (plain blue), $i_5: \gamma_2 \to \gamma_3$ (dashed blue).
For instance, consider the inclusion $i_3$ which expresses that the left edge of triangle $\L(\gamma_3)$ is transformed into the left double-edge of $\R(\gamma_3)$.
Formally, this is specified via the inclusion $i_3$ whose both components $\L(i_3)$ and $\R(i_3)$ are depicted in dotted blue arrows. 
The reader is invited to pay attention that even if Fig.~\ref{fig:sierpinskyRules} does not show them, the category $\Gam$ also contains compositions of the 5 main rule inclusions (\eg $i_3 \circ i_1$), identities and symmetries, that, as functors, $\L$ and $\R$ do respect.


\paragraph*{Computing with Global Transformations.}

\begin{figure}[t]
\begin{center}
    \resizebox{\linewidth}{!}{\input{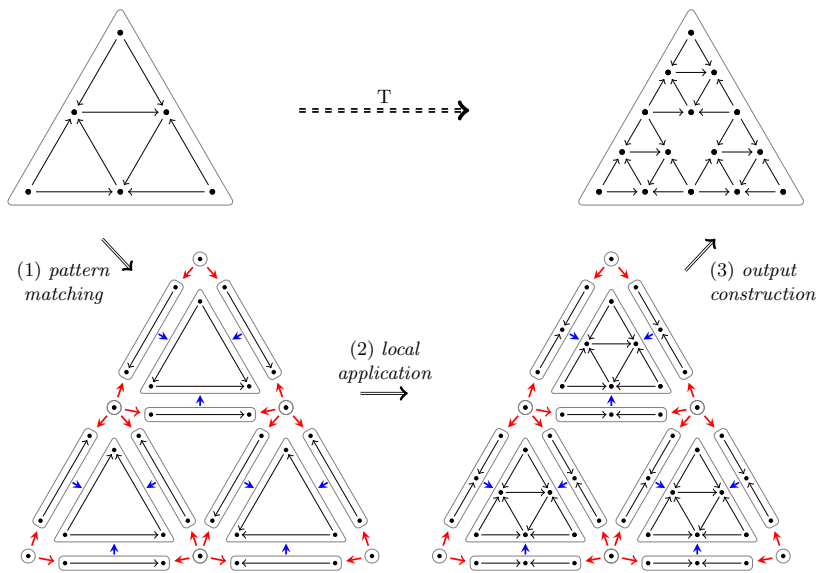}

\begin{tikzpicture}
    \begin{scope}[yscale=2/6*1.73]
        \begin{scope}[shift={(-1.3,0.7)}, scale=0.75]
            \ttri{0}{0}{0.75}{0.66}
            \draw [color=gray, rounded corners=5*0.66] (-0.5,-0.5) -- (4.5,-0.5) -- (2,7) --cycle;
        \end{scope}
        \begin{scope}[shift={(-0.7,-9)}, scale=0.4]
            \dttri{0}{0}{0.4}{1}
        \end{scope}
        \begin{scope}[shift={(8,0.7)}, scale=0.75]
            \tttri{0}{0}{0.75}{0.66}
            \draw [color=gray, rounded corners=5*0.66] (-0.5,-0.5) -- (4.5,-0.5) -- (2,7) --cycle;
        \end{scope}
        \begin{scope}[shift={(6,-9)}, scale=0.4]
            \dtttri{0}{0}{0.4}{1}
        \end{scope}
        
        \node (ut) at (-0.2, -.25*1.73) {};
        \node (dt) at (0.5, -1*1.73) {};
        \path (ut) edge[double, ->] node[below left,align=center]{(1) \it pattern\\\it matching}  (dt);
        
        \node (lt) at (3, 3) {};
        \node (rdt) at (6, 3) {};
        \path (lt) edge[thick, dashed, double, ->]node[above]{T} (rdt);
        
        \node (urdt) at (10, -0.25*1.73) {};
        \node (drdt) at (9.3, -1*1.73) {};
        \path (drdt) edge[double, ->]node[below right,align=left, xshift=0.1em]{(3) \it output\\ \it construction} (urdt);
        
        \node (rtt) at (4, -5) {};
        \node (ldtt) at (5, -5) {};
        \path (ldtt) edge[double, <-]node[above,yshift=0.2em,align=center]{(2) \it local\\\it application} (rtt);

    \end{scope}
\end{tikzpicture}}
    \caption{Step of computation of the Sierpinski gasket using the rules of Fig.~\ref{fig:sierpinskyRules}.}
    \label{fig:sierpinskyStep}
\end{center}
\end{figure}

Given a rule system $T$, its application on an arbitrary presheaf $p$ is a three-step process.
An illustration is given Fig.~\ref{fig:sierpinskyStep} based on the rule system of Fig.~\ref{fig:sierpinskyRules}.
The input is depicted at the top left and the output at the top right.
\begin{enumerate}

    \item \emph{Pattern matching} which consists in decomposing the input presheaf by mean of the rule \lhs
    It results a collection of \lhs instances, also called matches, structured by rule inclusions.
    This step is achieved by considering the comma category $\L_T/p$: objects in that category are indeed all the morphisms from some \lhs to $p$; morphisms are the instantiations of the rule inclusions between those matches.
    See arrow (1) in Fig.~\ref{fig:sierpinskyStep} for an illustration.
    Formally, the figure at bottom left is a representation of $\L_T \circ \Proj[\L_T/p]$ where $\Proj$ designates the first projection of the comma category mapping each instance $\langle \gamma \in \Gam_T, f: \L_T(\gamma) \mto p \rangle$ to the used rule $\gamma$.
    Notice the role of the rule inclusions (in red and blue) which are reminiscent of the input structure.
    
    \item \emph{Local application of rules} which consists in locally transforming each found \lhs into its corresponding \rhs, the structure being conserved thanks to rule inclusions.
    This step is achieved by applying the \rhs functor $\R_T$ on each rule instance: $\R_T \circ \Proj[\L_T/p]$, as illustrated in Fig.~\ref{fig:sierpinskyStep}.

    \item \emph{Output construction} which consists in assembling the output from the structured collection of \rhs
    The inclusions take here their full meaning as they are used to align the \rhs and drive the merge.
    See arrow (3) in Fig.~\ref{fig:sierpinskyStep} for an illustration.
    The resulting presheaf is formally the apex of a cocone from the diagram defined in the previous step which we used to obtain by colimit~\cite{fernandez2019lindenmayer,maignan2015global}.
    Since colimits are only guaranteed in $\GM$, we consider the following functor $\overline{T} : \GI \to \GM$:
    \begin{equation}
    \label{eq:gtstep}
    \overline{T}(\--) = \Colim(\D_T(\--))\qquad\textnormal{with}\qquad\D_T(\--) = \U \circ \R_T \circ \Proj[\L_T/\--]
    \end{equation}
    using the forgetful functor $\U$, $\overline{T}(p)$ being the result of the application.
\end{enumerate}

\begin{remark}
\label{rem:overlineT}
Notice that $\overline{T}$ is a complete functor also acting on morphisms.
Consider a monomorphism $h: p \mto p'$.
By definition of colimits, $\overline{T}(p)$ is the universal cocone with components $\overline{T}(p)_{\langle \gamma, f \rangle}: \D_T(p)(\langle \gamma, f \rangle) \to \overline{T}(p)$ for each instance $\langle \gamma, f \rangle \in \L_T/p$.
We have a similar construction for $\overline{T}(p')$ which gives rise to a cocone $C$ as the restriction of $\overline{T}(p')$ on the diagram of $\overline{T}(p)$.
Formally, $C$ is defined with apex $C = \overline{T}(p')$ and components $C_{\langle \gamma, f \rangle} = \overline{T}(p')_{\langle \gamma, h \circ f \rangle}$.
The image $\overline{T}(h)$ is the mediating morphism from colimit $\overline{T}(p)$ to $C$.
\end{remark}

We focus on those rule systems where the results stay inside $\GI$, \ie, such that all previous mediating morphisms are monomorphisms.
This leads to the following definition of global transformation for $\GI$.

\begin{definition}
A \emph{global transformation} $T$ is a rule system such that $\overline{T}$ factors through the forgetful functor $\U : \GI \to \GM$.
In this case, we denote $T: \GI \to \GI$ the functor such that
$
\U \circ T = \overline{T}.
$
\end{definition}

The Sierpenski rule system of Fig.~\ref{fig:sierpinskyRules} is a global transformation.
Its behavior is to split all edges and add some edges for acylic triangles.
Thus, adding vertices and edges to an input only adds vertices and edges to the output.
The induced functor maps monomorphisms to monomorphisms, so factors through $\U$.

Figure~\ref{fig:figex} introduces four additional examples of rule systems that illustrate the various properties that we consider exhaustively.
Let us see which of them are global transformations as a preparation for later considerations.

\newcommand{\rulesep}{\unskip\ \vrule\ }
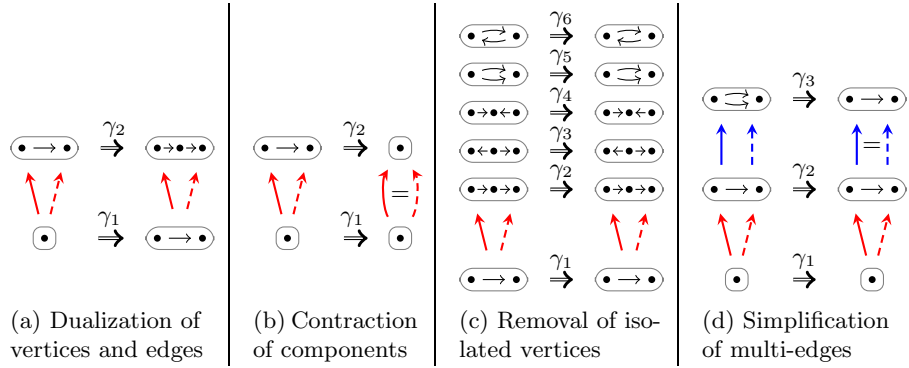
\begin{figure}[t]
\hfill
\begin{subfigure}[t]{0.22\textwidth}
\begin{tikzpicture}[scale=0.6]
    \node (u) at (1, -0.5) {};
    \begin{scope}
        \edgegraph{0}{2}
        \nodegraph{0.5}{0}
        \draw [color=red, >=stealth, thick, shorten >=8pt, shorten <=8pt, ->] (d0) edge (e0);
        \draw [color=red, densely dashed, >=stealth, thick, shorten >=8pt,shorten <=8pt, ->] (d0) edge (e1);
    \end{scope}
    \begin{scope}[shift={(3,0)}]
        \eedgegraphp{0}{2}
        \edgegraphp{0}{0}

        \draw [color=red, >=stealth, thick, shorten >=8pt,shorten <=8pt, ->] (mep0) edge (eep0);
        \draw [color=red, densely dashed, >=stealth, thick, shorten >=8pt,shorten <=8pt, ->] (mep0) edge (eep1);
    \end{scope}
    \draw (d0) edge[shorten >=18pt, shorten <=19pt, double, ->] node[above,yshift=0.1em]{$\gamma_1$} (mep0);
    \draw (me0) edge[shorten >=20pt, shorten <=18pt, double, ->] node[above,yshift=0.1em]{$\gamma_2$} (eep01);
\end{tikzpicture}
\caption{\label{fig:figexa}Dualization of\\vertices and edges}
\end{subfigure}
\hfill\rulesep\hfill
\begin{subfigure}[t]{0.18\textwidth}
\begin{tikzpicture}[scale=0.6]
    \node (u) at (1, -0.5) {};
    \begin{scope}
        \edgegraph{0}{2}
        \nodegraph{0.5}{0}
        \draw [color=red, >=stealth, thick, shorten >=8pt, shorten <=8pt, ->] (d0) edge (e0);
        \draw [color=red, densely dashed, >=stealth, thick, shorten >=8pt,shorten <=8pt, ->] (d0) edge (e1);
    \end{scope}
    \begin{scope}[shift={(2.5,0)}]
        \nodegraphpp{0.5}{2}
        \nodegraphp{0.5}{0}
        \node (=) at (1, 1.45) {$=$};

        
        \draw [color=red, >=stealth, thick, shorten >=8pt,bend left, shorten <=8pt, ->] (dp0) edge (dpp0);
        \draw [color=red, densely dashed, >=stealth, thick, shorten >=8pt,bend right, shorten <=8pt, ->] (dp0) edge (dpp0);
    \end{scope}
    \draw (d0) edge[shorten >=11pt, shorten <=20pt, double, ->] node[above,xshift=.35em, yshift=0.1em]{$\gamma_1$} (dp0);
    \draw (me0) edge[shorten >=11pt, shorten <=18pt, double, ->] node[above,xshift=.35em, yshift=0.1em]{$\gamma_2$} (dpp0);
\end{tikzpicture}
\caption{\label{fig:figexb}Contraction\\of components}
\end{subfigure}
\hfill\rulesep\hfill
\begin{subfigure}[t]{0.22\textwidth}
\begin{tikzpicture}[scale=0.6]
    \begin{scope}
        \loopedgegraph{0}{5.4}
        \paredgegraph{0}{4.55}
        \emedgegraph{0}{3.7}
        \ebedgegraph{0}{2.85}
        \eedgegraph{0}{2}
        \edgegraph{0}{0}
        \draw [color=red, >=stealth, thick, shorten >=8pt, shorten <=8pt, ->] (me0) edge (ee0);
        \draw [color=red, densely dashed, >=stealth, thick, shorten >=8pt,shorten <=8pt, ->] (me0) edge (ee1);
    \end{scope}
    \begin{scope}[shift={(3,0)}]
        \loopedgegraphp{0}{5.4}
        \paredgegraphp{0}{4.55}
        \emedgegraphp{0}{3.7}
        \ebedgegraphp{0}{2.85}
        \eedgegraphp{0}{2}
        \edgegraphp{0}{0}

        \draw [color=red, >=stealth, thick, shorten >=8pt,shorten <=8pt, ->] (mep0) edge (eep0);
        \draw [color=red, densely dashed, >=stealth, thick, shorten >=8pt,shorten <=8pt, ->] (mep0) edge (eep1);
    \end{scope}
    \draw (me0) edge[shorten >=18pt, shorten <=18pt, double, ->] node[above,yshift=0.05em]{$\gamma_1$} (mep0);
    \draw (ee01) edge[shorten >=20pt, shorten <=20pt, double, ->] node[above,yshift=0.05em]{$\gamma_2$} (eep01);
    \draw (ebe01) edge[shorten >=20pt, shorten <=20pt, double, ->] node[above,yshift=0.05em]{$\gamma_3$} (ebep01);
    \draw (eme01) edge[shorten >=20pt, shorten <=20pt, double, ->] node[above,yshift=0.05em]{$\gamma_4$} (emep01);
    \draw (mee0) edge[shorten >=18pt, shorten <=18pt, double, ->] node[above,yshift=0.05em]{$\gamma_5$} (meep0);
    \draw (mle0) edge[shorten >=18pt, shorten <=18pt, double, ->] node[above,yshift=0.05em]{$\gamma_6$} (mlep0);
\end{tikzpicture}
\caption{\label{fig:figexc}Removal of isolated vertices}
\end{subfigure}
\hfill\rulesep\hfill
\begin{subfigure}[t]{0.22\textwidth}
\begin{tikzpicture}[scale=0.6]
    \begin{scope}
        \paredgegraph{0}{4}
        \edgegraph{0}{2}
        \nodegraph{0.5}{0}
        \draw [color=blue, >=stealth, thick, shorten >=6pt,transform canvas={xshift=-1.5ex},shorten <=6pt, ->] (me0) edge (mee0);
        \draw [color=blue, densely dashed, >=stealth, thick, shorten >=6pt,transform canvas={xshift=1.5ex},shorten <=6pt, ->] (me0) edge (mee0);
        \draw [color=red, >=stealth, thick, shorten >=8pt, shorten <=8pt, ->] (d0) edge (e0);
        \draw [color=red, densely dashed, >=stealth, thick, shorten >=8pt,shorten <=8pt, ->] (d0) edge (e1);
    \end{scope}
    \begin{scope}[shift={(3,0)}]
        \edgegraphpp{0}{4}
        \edgegraphp{0}{2}
        \nodegraphp{0.5}{0}
        \node (=) at (1, 3.45) {$=$};
        \draw [color=blue, >=stealth, thick, shorten >=6,transform canvas={xshift=-1.5ex},shorten <=6pt, ->] (mep0) edge (mepp0);
        \draw [color=blue, densely dashed, >=stealth, thick, shorten >=6pt,transform canvas={xshift=1.5ex},shorten <=6pt, ->] (mep0) edge (mepp0);
        
        
        \draw [color=red, >=stealth, thick, shorten >=8pt,shorten <=8pt, ->] (dp0) edge (ep0);
        \draw [color=red, densely dashed, >=stealth, thick, shorten >=8pt,shorten <=8pt, ->] (dp0) edge (ep1);
    \end{scope}
    \draw (d0) edge[shorten >=20pt, shorten <=20pt, double, ->] node[above,yshift=0.1em]{$\gamma_1$} (dp0);
    \draw (me0) edge[shorten >=18pt, shorten <=18pt, double, ->] node[above,yshift=0.1em]{$\gamma_2$} (mep0);
    \draw (mee0) edge[shorten >=18pt, shorten <=18pt, double, ->] node[above,yshift=0.1em]{$\gamma_3$} (mepp0);
\end{tikzpicture}
\caption{\label{fig:figexd}Simplification\\of multi-edges}
\end{subfigure}
\hfill
\caption{\label{fig:figex} Some rule-systems. Note that all of them remove self-loops.}
\end{figure}

\begin{example}\label{ex:exc-gt}
    The removal of isolated vertices (Fig.~\ref{fig:figexc}) is a global transformation since removal is definitely a functorial behavior from $\GI$ to $\GI$.
\end{example}

\begin{example}\label{ex:exd-gt}
    Simplification of multi-edges (Fig.~\ref{fig:figexd}) is also a global transformation.
    For any two vertices $a$ and $b$ with at least an edge from $a$ to $b$, it merges all edges from $a$ to $b$ into a single edge.
\end{example}

\begin{example}\label{ex:exa-gt}
    On the contrary, the dualization of vertices and edges (Fig.~\ref{fig:figexa}) is not a global transformation.
    Indeed, consider a monomorphism $i: p_2 \mto c_3$ from the path of length 2 $p_2$ to the cycle of length 3 $c_3$.
    In this case $\overline{T}(p_2) = p_3$, $\overline{T}(c_3) = c_3$, and there is no monomorphism sending the four vertices of $p_3$ to the three vertices of $c_3$.
\end{example}

\begin{example}\label{ex:exb-gt}
    Contraction of components (Fig.~\ref{fig:figexb}) is not a global transformation.
    Consider the monomorphism $i: d_2 \mto p_1$ from the graph $d_2$ with only vertices to the path $p_1$ of length 1.
    In this case, $\overline{T}(d_2) = d_2$ and $\overline{T}(p_1) = d_1$ and there is no monomorphism sending the two vertices of $d_2$ to the single vertex of $d_1$.
\end{example}

\section{Accretion and Incrementality}
\label{sec:online}

We are interested in having an online algorithm computing Eq.~\eqref{eq:gtstep} where the output $\overline{T}(p)$, for any $p$, is built while the comma category $\L_T/p$ is discovered by pattern matching.
Informally, starting from a seed corresponding to the \rhs of some initial instance, $\L_T/p$ is visited from neighbor to neighbor, each instance providing a new piece of material to accumulate to the current intermediate result.
We first give the formal features to be able to speak about intermediate results, leading to the notion of \emph{accretive rule system}.
Then we give a criterion, called \emph{incrementality}, for a rule system to be an accretive global transformation.

In this algorithmic perspective, we restrict our discussion to finite presheaves and finite rule systems so that $\L_T/p$ is finite as well.
Moreover we assume that $\L_T/p$ is connected; disconnected components can be processed independently.
Finally, we fix a given finite rule system $T = \langle \Gam, \L, \R \rangle$ and a finite presheaf $p$.

\subsection{Accretive Rule Systems and Global Transformations}
\label{sec:accretive}

Informally an intermediate result consists in the application of $\overline{T}$ on an incomplete knowledge of $\L/p$, \ie, on a partial decomposition of the input.
Our study of partial decompositions starts with some remarks.
To begin, the comma category $\L/p$ is a preordered set.
\begin{proposition}
\label{prop:comma_thin}
For any category $\I$, any full embedding $F : \I \to \GI$, and any presheaf $p \in \GI$, the comma category $F/p$ is \emph{thin}, \ie, there is at most one morphism between any two objects.
\end{proposition}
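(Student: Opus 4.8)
The plan is to unpack the definition of a comma-category morphism and then use the two structural hypotheses separately: in $\GI$ every arrow is a monomorphism, and a full embedding is in particular faithful.

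First I would fix two objects $\langle a, f\rangle$ and $\langle a', f'\rangle$ of $F/p$, so that $f : F(a) \mto p$ and $f' : F(a') \mto p$ are arrows of $\GI$. By the description of comma categories recalled just above the statement, a morphism $\langle a, f\rangle \to \langle a', f'\rangle$ is entirely determined by an arrow $e : a \to a'$ of $\I$ subject to the single equation $f = f' \circ F(e)$ (its second component being forced to be $f'$). So it suffices to show that such an $e$ is unique: suppose $e_1, e_2 : a \to a'$ both satisfy $f' \circ F(e_1) = f = f' \circ F(e_2)$.

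Next I would cancel $f'$ on the left. Since $f'$ is an arrow of $\GI$, its image $\U(f')$ is a monomorphism in $\GM$; as composition in $\GI$ agrees with composition in $\GM$ along the faithful forgetful functor $\U$, left-cancellability of $\U(f')$ yields $\U(F(e_1)) = \U(F(e_2))$, hence $F(e_1) = F(e_2)$ in $\GI$ by faithfulness of $\U$. Finally, $F$ is a full embedding, hence faithful, so $F(e_1) = F(e_2)$ forces $e_1 = e_2$. Therefore there is at most one morphism between $\langle a, f\rangle$ and $\langle a', f'\rangle$, \ie $F/p$ is thin.

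I do not expect a genuine obstacle here; the only thing to be careful about is bookkeeping the category in which each equation and each cancellation takes place — the equation $f = f' \circ F(e_i)$ is stated in $\GI$, the monomorphy of $f'$ is a property inherited from $\GM$, and passing between the two via $\U$ loses nothing because $\U$ is faithful. Note also that only faithfulness of $F$ is used, not its fullness.
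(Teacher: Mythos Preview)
Your argument is correct and essentially identical to the paper's own proof: fix two parallel morphisms, use that the target leg $f'$ is a monomorphism to cancel it, then use faithfulness of $F$ to conclude. The only difference is cosmetic --- you route the cancellation explicitly through $\U$, whereas the paper cancels directly in $\GI$ (every arrow of $\GI$ is already left-cancellable there, since it is a mono in $\GM$ and $\GI$ is a subcategory); both are fine.
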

\begin{proof}
For any $i_1, i_2 \in \I$, consider $\langle i_1, m_1: F(i_1) \mto p \rangle$ and $\langle i_2, m_2: F(i_2) \mto p \rangle$ of $F/p$ together with two parallel morphisms $\langle f: i_1 \to i_2, m_2 \rangle$ and $\langle g: i_1 \to i_2, m_2 \rangle$ between them.
We want to prove that $f = g$.
By definition of morphisms in $F/p$, $m_2 \circ F(f) = m_1 = m_2 \circ F(g)$.
Since $m_2$ is a monomorphism, we get $F(f) = F(g)$.
Since $F$ is in particular faithful, we obtain the desired equality $f = g$.
\end{proof}
Therefore thinking in terms of maximal, non-maximal and minimal instances, sub- and super-instances actually makes sense for any comma category $\L/p$.

Moreover, $\L/p$ being a preorder makes the colimit  $\overline{T}(p)$ of the diagram $\D_T(p) : \L/p \to \GM$ special.
Informally, only maximal instances matter, sub-instances being used to specify how to amalgamate the \rhs of maximal instances.
Formally, whenever we have a morphism $\langle e, f' \rangle : \langle \gamma, f' \circ \L(e) \rangle \to \langle \gamma', f' \rangle \in \L/p$, the \rhs of $\gamma'$ contains the \rhs of $\gamma$ through $e : \gamma \to \gamma'$.
With $f = f' \circ \L(e)$, this means that $\langle \gamma, f \rangle$ does not contribute more data to the output.
The role of the non-maximal $\langle \gamma, f \rangle$ is to specify how the \rhs of some $\gamma''$ should be aligned with the \rhs of $\gamma'$ in the resulting presheaf when there is a second morphism $\langle \gamma, f \rangle \to \langle \gamma'', f'' \rangle$ to another maximal instance $\langle \gamma'', f'' \rangle$.

As an example, consider the diagram depicted on bottom right of Fig.~\ref{fig:sierpinskyStep} and its colimit on the top right.
All the elements of the colimit are given in the \rhs of the maximal instances (the 3 triangles).
The minimal instances (the 6 one-vertex graphs) are used to specify how the 9 vertices of the 3 refined triangles should be merged to get the colimit.

\paragraph*{Computing $\overline{T}$ Online.}

Given a presheaf $p \in \GM$, we call a \emph{partial decomposition of $p$ with respect to $\L$} a subset $M$ of maximal instances of $\L/p$ such that the restriction $\widetilde{M}$ of $\L/p$ to $M$ and morphisms into $M$ remains connected.
We write $\widetilde{\L/p}$ for  the category of partial decompositions of $p$ with set inclusions as morphisms.
$\widetilde{\L/p}$ represents the different ways $\L/p$ can be visited from maximal instance to maximal instance by the use of non-maximal instances to guide the merge.
We extend the action of $\overline{T}$ on $p$ into a function {$\widetilde{T}_p : \widetilde{\L/p} \to \GM$} as follows:
\begin{equation}
    \widetilde{T}_p(M) = \Colim(\D_T(p) \restriction \widetilde{M}).
    \label{eq:tilde}
\end{equation}
The definition $\widetilde{T}_p$ is in fact a complete functor also acting on any morphism $M \subseteq M'$ of $\widetilde{\L/p}$ using the exact same construction as given in Remark~\ref{rem:overlineT} for $\overline{T}$.

In the case of Fig.~\ref{fig:sierpinskyStep}, the maximal instances being the three triangles, the partial decompositions consist of subsets having 0, 1, 2 or 3 of these triangles.
As an example, choose arbitrarily two of these triangles, say $t_1$ and $t_2$.
The intermediate result $\widetilde{T}_p(\{t_1, t_2\})$ is the graph consisting of the two refinements of $t_1$ and $t_2$ glued by their common vertex.

The online computation of $\overline{T}(p)$ consists in iterating a simple step that builds $\widetilde{T}_p(M \cup \{ m \})$ from $\widetilde{T}_p(M)$ as soon as a new maximal instance $m$ has been discovered.
This step is the local amalgamation of $\D_T(p)(m)$ with $\widetilde{T}_p(M)$ considering all the non-maximal sub-instances, say $\{ n', \ldots, n'' \}$, shared by the elements of $M$ and $m$.
Indeed each such sub-instance $n$ gives rise to a span $\widetilde{T}_p(M) \from \D_T(p)(n) \to \D_T(p)(m)$.
Gathering all these spans leads to the \emph{suture} diagram $S_{M,m}$ defined as follows:
\begin{equation}
\label{eq:suture}
\begin{tikzpicture}[xscale=1,yscale=0.4, baseline=(current  bounding  box.center)]
    \node (i1) at (0,0) {$D(n')$};
    \node (dots) at (1,0) {$\dots$};
    \node (i2) at (2,0) {$D(n'')$};
    \node (Ci) at (0,4) {$\widetilde{T}_p(M)$};
    \node (mi) at (3,2) {$D(m)$};
    \draw[->] (i1) to node[above left]{\small $\widetilde{T}_p(M)_{n'}$} (Ci);
    \draw[->] (i2) to node[above,xshift=0.6em,yshift=0.5em]{\small $\widetilde{T}_p(M)_{n''}$} (Ci);
    \draw[->] (i1) to node[right,xshift=-0.4em,yshift=1.1em]{\small $D(e')$} (mi);
    \draw[->] (i2) to node[right,yshift=-.3em]{\small $D(e'')$} (mi);
\end{tikzpicture}
\end{equation}
where $D$ stands for $\D_T(p)$ for simplicity, notation that we will use from now on.
The colimit of a single span being called a pushout, we call the colimit of this collection of spans a \emph{generalized pushout}.
The fact that this generalized pushout $\Colim(S_{M,m})$ and the desired colimit $\widetilde{T}_p(M \cup \{ m \})$ as given in Eq.~\eqref{eq:tilde} do coincide is formalized by the following proposition.
\begin{proposition}
\label{prop:online_step}
Let $M' = M \cup \{ m \} \in \widetilde{\L/p}$. As a cocone, $\widetilde{T}_p(M')$ has the same apex as $\Colim(S_{M,m})$ and has components 
\begin{equation*}
\widetilde{T}_p(M')_n =
\begin{cases}
\Colim(S_{M,m})_{\widetilde{T}_p(M)} \circ \widetilde{T}_p(M)_n & \text{if } n \in \widetilde{M}, \\
\Colim(S_{M,m})_{D(m)} \circ D(e) & \text{for any } e: n \mto m,
\end{cases}
\end{equation*}
where $S_{M,m}$ is diagram~\eqref{eq:suture}.
\end{proposition}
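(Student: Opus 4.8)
The statement asserts that the generalized pushout $\Colim(S_{M,m})$ realizes the colimit $\widetilde{T}_p(M') = \Colim(D \restriction \widetilde{M'})$, with the explicit cocone components displayed. The natural approach is to verify the universal property directly, using the fact (Proposition~\ref{prop:comma_thin}) that $\widetilde{L/p}$-related comma categories are thin: all diagrams over subsets of instances are diagrams over \emph{preorders}, so a cocone is determined by its components on maximal objects together with the commutation constraints coming from non-maximal objects. The plan is to (i) check that the prescribed data $\bigl(\Colim(S_{M,m}), \{\widetilde{T}_p(M')_n\}_{n\in\widetilde{M'}}\bigr)$ is a well-defined cocone under $D\restriction\widetilde{M'}$; (ii) check it is colimiting by exhibiting, for any competing cocone, the mediating morphism and its uniqueness; concluding that it coincides (canonically) with $\widetilde{T}_p(M')$.

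For step (i): the objects of $\widetilde{M'}$ are the objects of $\widetilde{M}$ together with $m$ and the non-maximal instances $n$ admitting a morphism $e : n \mto m$ (some of which, namely those in $\{n',\ldots,n''\}$, also map into $M$). I would split the verification of naturality into cases according to where the source and target of a morphism $n_1 \to n_2$ of $\widetilde{M'}$ lie. When both lie in $\widetilde{M}$, commutation follows because $\widetilde{T}_p(M)$ is a cocone and we postcompose by the single leg $\Colim(S_{M,m})_{\widetilde{T}_p(M)}$. When $n_2 = m$, we have some $e : n_1 \mto m$; if $n_1 \in \widetilde M$ then $n_1$ is one of the shared sub-instances, so $n_1$ appears in $S_{M,m}$ with legs $\widetilde{T}_p(M)_{n_1}$ into $\widetilde{T}_p(M)$ and $D(e)$ into $D(m)$, and commutation is exactly one of the commuting triangles of the cocone $\Colim(S_{M,m})$; if $n_1$ is one of the new non-maximal instances mapping only into $m$, then $D(e) = D(e_2)\circ D(e_1)$ for the relevant factorization and functoriality of $D$ gives the identity. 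The only subtlety is \emph{well-definedness} of $\widetilde{T}_p(M')_n$ for a non-maximal $n$: the formula $\Colim(S_{M,m})_{D(m)}\circ D(e)$ must not depend on the choice of $e : n \mto m$, and if $n$ is also shared it must agree with $\Colim(S_{M,m})_{\widetilde{T}_p(M)}\circ\widetilde{T}_p(M)_n$. Independence of $e$ is immediate from thinness ($F/p$ is thin, so $e$ is unique when it exists); agreement of the two formulas on shared sub-instances is, again, precisely a commuting triangle of the cocone $\Colim(S_{M,m})$ defining the suture.

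For step (ii): given any cocone $(X, \{x_n\}_{n\in\widetilde{M'}})$ under $D\restriction\widetilde{M'}$, restrict it to $\widetilde M$ to get a cocone under $D\restriction\widetilde M$, inducing $u : \widetilde{T}_p(M) \to X$; the family $\{x_n \mid n \in \widetilde M\}\cup\{x_m\}$ also forms a cocone under $S_{M,m}$ — the legs from the shared sub-instances commute because $X$ is a cocone under $\widetilde{M'}$ and $n' ,\ldots, n''$ together with their morphisms into $M$ and into $m$ all live in $\widetilde{M'}$ — so it induces $v : \Colim(S_{M,m}) \to X$ with $v\circ\Colim(S_{M,m})_{\widetilde{T}_p(M)} = u$ and $v\circ\Colim(S_{M,m})_{D(m)} = x_m$. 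One then checks $v$ is compatible with all the prescribed components $\widetilde{T}_p(M')_n$ (immediate from the two displayed cases and the defining equations of $u$ and $v$), and that $v$ is the unique such morphism (uniqueness of $u$ from the colimit $\widetilde{T}_p(M)$, then uniqueness of $v$ from the colimit $\Colim(S_{M,m})$, using that every object of $\widetilde{M'}$ either lies in $\widetilde M$ or maps to $m$, so the components of $\widetilde{T}_p(M')$ jointly determine any morphism out of the apex just as the legs of $S_{M,m}$ do). Identifying this universal cocone with $\widetilde{T}_p(M')$ and reading off its components yields the claimed formulas.

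The main obstacle is bookkeeping rather than conceptual: correctly enumerating the objects and morphisms of $\widetilde{M'}$ not already in $\widetilde M$ — in particular recognizing that any non-maximal instance newly brought in by adding $m$ either maps to $m$ alone (and contributes nothing beyond $D(m)$) or is one of the shared sub-instances $n',\ldots,n''$ already recorded as a leg of the suture $S_{M,m}$ — and then invoking thinness of the comma category at each point to collapse all choice-dependence. Once the case analysis is organized around "is the instance in $\widetilde M$, equal to $m$, or a non-maximal instance over $m$", each case reduces to a one-line diagram chase against the universal property of $\Colim(S_{M,m})$ or of $\widetilde{T}_p(M)$.
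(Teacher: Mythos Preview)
Your proposal is correct and follows essentially the same strategy as the paper's own proof: verify that the displayed formula defines a cocone under $D\restriction\widetilde{M'}$, then establish universality by first invoking the universal property of $\widetilde T_p(M)$ on the $\widetilde M$ part and then that of $\Colim(S_{M,m})$. The paper packages these two steps by passing through an auxiliary ``bigger diagram'' obtained by adjoining the object $\widetilde T_p(M)$ and its cocone legs to $D\restriction\widetilde{M'}$, observing the bijection between cocones on this enlarged diagram and on $D\restriction\widetilde{M'}$, and then restricting to $S_{M,m}$; your direct verification is the unpacked version of the same argument, and you are somewhat more explicit than the paper about well-definedness (uniqueness of $e$ via thinness, agreement of the two formulas on shared sub-instances).
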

\begin{proof}
    For simplicity, we write $S$ for $S_{M,m}$ and $K$ (resp. $K'$) for $\widetilde{T}_p(M)$ (resp. $\widetilde{T}_p(M')$) as a cocone.
    Let us consider a bigger diagram containing the diagrams $D \restriction \widetilde{M}$, $S$ and $D \restriction \widetilde{M'}$.
    For this, we take $D \restriction \widetilde{M'}$ and add the object $\widetilde{T}_p(M)$ and all morphisms $K(o) : D(o) \to \widetilde{T}_p(M)$ for $o \in \widetilde{M}$.
    By universality of $\widetilde{T}_p(M)$, there is a bijection between cocones on this big diagram and cocones on 
    $D \restriction \widetilde{M'}$.
    \begin{center}
        \begin{tikzpicture}[xscale=1,yscale=0.4]
        \node (n1)  at (0,0) {$D(n)$};
        \node (d12) at (1,0) {$\dots$};
        \node (n2)  at (2,0) {$D(n')$};
        \node (d23) at (3,0) {$\dots$};
        \node (n3)  at (4,0) {$D(n'')$};
        \node (d34) at (5,0) {$\dots$};
        \node (n4)  at (6,0) {$D(n''')$};
        \node (m1)   at (1,2) {$D(m_1)$};
        \node (d1i)  at (2,2) {$\dots$};
        \node (mi)   at (3,2) {$D(m_i)$};
        \node (mip1) at (5,2) {$D(m)$};
        \node (gi) at (2,4) {$\widetilde{T}_p(M)$};
        \draw[right hook->] (n1) to (m1);
        \draw[right hook->,shorten >=4pt] (n2) to (d1i);
        \draw[right hook->] (n2) to (mip1);
        \draw[left hook->] (n3) to (mi);
        \draw[right hook->] (n3) to (mip1);
        \draw[left hook->] (n4) to (mip1);
        \draw[->] (m1) to (gi);
        \draw[->] (d1i) to (gi);
        \draw[->] (mi) to (gi);
        \draw[->] (n1) to[bend left] (gi);
        \draw[->] (n3) to[bend right] (gi);
        \end{tikzpicture}
    \end{center}
    where $M = \{ m_1, \ldots, m_i \}$.
    Adding to the picture $\Colim(S)$ with its associated components $t: \widetilde{T}_p(M) \to \Colim(S)$ and $c: D(m) \to \Colim(S)$, recall that $K'_{o} = t \circ K_{o}$ for all $o \in \widetilde{M}$, and $K'_{o} = c \circ D(e)$ for all $e : o \mto m$.
    This is indeed a cocone on $D \restriction \widetilde{M'}$ by properties of $t$ and $c$, and extending it with $K'_{\widetilde{T}_p(M)} = t$ gives us the unique corresponding cocone on the big diagram.
    To establish that it is universal, let us consider another arbitrary cocone $W$ on $D \restriction \widetilde{M'}$.
    \begin{center}
        \begin{tikzpicture}[xscale=1,yscale=0.4]
        \node (n1)  at (0,0) {$D(n)$};
        \node (d12) at (1,0) {$\dots$};
        \node (n2)  at (2,0) {$D(n')$};
        \node (d23) at (3,0) {$\dots$};
        \node (n3)  at (4,0) {$D(n'')$};
        \node (d34) at (5,0) {$\dots$};
        \node (n4)  at (6,0) {$D(n''')$};
        \node (m1)   at (1,2) {$D(m_1)$};
        \node (d1i)  at (2,2) {$\dots$};
        \node (mi)   at (3,2) {$D(m_i)$};
        \node (mip1) at (5,2) {$D(m)$};
        \node (gi) at (2,4) {$\widetilde{T}_p(M)$};
        \draw[right hook->] (n1) to (m1);
        \draw[right hook->,shorten >=4pt] (n2) to (d1i);
        \draw[right hook->] (n2) to (mip1);
        \draw[left hook->] (n3) to (mi);
        \draw[right hook->] (n3) to (mip1);
        \draw[left hook->] (n4) to (mip1);
        \draw[->] (m1) to (gi);
        \draw[->] (d1i) to (gi);
        \draw[->] (mi) to (gi);
        \draw[->] (n1) to[bend left] (gi);
        \draw[->] (n3) to[bend right] (gi);
        \node (gip1) at (5,5) {$\Colim(S)$};
        \draw[->] (gi)   to node[below]{$t$} (gip1);
        \draw[->] (mip1) to node[right]{$c$} (gip1);
        \node (w) at (0,4) {$W$};
        \draw[->] (n1) to[bend left=5] (w);
        \draw[->] (m1) to (w);
        \draw[->] (mip1) to (w);
        \draw[dashed,->] (gi) to (w);
        \draw[dashed,->] (gip1) to[bend right] (w);
        \end{tikzpicture}
    \end{center}
    By diagram chasing, one can see that this cocone can be restricted to its components on $\widetilde{M}$, \ie on $\{n,\ldots,n'',m_1,\ldots,m_i\}$.
    This gives us a unique extension of $W$ on all the big diagram.
    By restricting again to its components on $S$, \ie components at $n',\ldots,n'',\widetilde{T}_p(M),m$, and by universality of $\widetilde{T}_p(M')$, we obtain a unique mediating from $\widetilde{T}_p(M')$ to $W$, as wanted.
    \qed
\end{proof}

As an illustration, for computing $\widetilde{T}_p(\{t_1,t_2,t_3\})$ (which is in fact the output in Fig.~\ref{fig:sierpinskyStep}), it is enough to amalgamate $\widetilde{T}_p(\{t_1,t_2\})$ (already considered) with the refinement $D(t_3)$ of the last triangle $t_3$, using as suture the two vertices of $t_3$ shared with $t_1$ and $t_2$ playing the role of $n'$ and $n''$ in diagram~\eqref{eq:suture}.

\begin{remark}
\label{rem:induction}
To summarize, computing $\overline{T}(p)$ online is a matter of collecting the finite set of all maximal instances $\{ m_1, m_2, \ldots, m_k \}$ of $\L/p$ in any order satisfying that $m_{i+1}$ is connected to $\widetilde{M_i}$ where $M_i =\{m_1,\ldots,m_i\}$.
This allows to replace the single colimit computation of the whole diagram, as in Eq~\eqref{eq:gtstep}, by a sequence of smaller colimit computations using the induction relation:
\begin{equation}
    \widetilde{T}_p(M_i) =
    \begin{cases}
        D(m_1) & \text{if } i = 1\\
        \Colim(S_{M_{i-1},m_{i}}) & \text{otherwise.}
    \end{cases}
\end{equation}
The base case is obtained from Eq.~\ref{eq:tilde} for a singleton set of maximal instance, and the inductive one is
established by Prop.~\ref{prop:online_step}, $S_{M_{i-1},m_{i}}$ being a generalized pushout diagram linking $\widetilde{T}_p(M_{i-1})$ and $D(m_i)$.
The final value $\widetilde{T}_p(M_k)$ of this sequence is the colimit of $\widetilde{M_k}$.
$\widetilde{M_k}$ is exactly the diagram $D$ without the arrows between the non-maximal instances.
But the colimit $\widetilde{T}_p(M_k)$ of $\widetilde{M_k}$ is necessarily the same as the colimit $\overline{T}(p)$ of $D$ by the following proposition.

\begin{proposition}
\label{prop:comma_final}
The subcategory $\widetilde{M_k}$ of $\L/p$ given by all instances but only morphisms to maximal instances is final in $\L/p$, in the sense of final functor.
\end{proposition}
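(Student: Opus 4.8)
The plan is to verify the definition of a final functor directly for the inclusion $\iota : \widetilde{M_k} \hookrightarrow \L/p$. Recall that $\iota$ is final exactly when, for every object $x$ of $\L/p$, the comma category $x \downarrow \iota$ — whose objects are pairs $\langle a, h : x \to \iota(a) \rangle$ and whose morphisms $\langle a, h \rangle \to \langle a', h' \rangle$ are morphisms $u : a \to a'$ of $\widetilde{M_k}$ with $\iota(u) \circ h = h'$ — is nonempty and connected. This is precisely the property invoked in Remark~\ref{rem:induction}: a final functor is colimit-preserving, and $\widetilde{M_k}$ has the same objects as $\L/p$, so establishing finality is all that is needed to conclude $\widetilde{T}_p(M_k) = \overline{T}(p)$.

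First I would rewrite $x \downarrow \iota$ using that $\L/p$ is thin (Proposition~\ref{prop:comma_thin}): its objects are then just the instances $a$ lying above $x$ in the preorder $\L/p$, the connecting morphism $x \to a$ being unique when it exists; and since $\widetilde{M_k}$ keeps every instance as an object but only those morphisms whose codomain is a maximal instance, a morphism $\langle a, h \rangle \to \langle a', h' \rangle$ exists iff $a = a'$ or $a'$ is maximal with $a \le a'$, the compatibility condition $\iota(u) \circ h = h'$ being automatic by thinness. Nonemptiness is then immediate, since $\langle x, \mathrm{id}_x \rangle$ is an object. For connectedness I would show that $\langle x, \mathrm{id}_x \rangle$ is a hub: given any object $\langle a, h \rangle$, finiteness of $\L/p$ lets us pick a maximal instance $m$ with $a \le m$; then $x \le a \le m$, and because $m$ is maximal both the unique morphisms $x \to m$ and $a \to m$ lie in $\widetilde{M_k}$, giving morphisms $\langle x, \mathrm{id}_x \rangle \to \langle m, \cdot \rangle$ and $\langle a, h \rangle \to \langle m, \cdot \rangle$ in $x \downarrow \iota$ (these two maps into $\langle m, \cdot \rangle$ being coherent by thinness again). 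Hence every object is linked to $\langle x, \mathrm{id}_x \rangle$, so $x \downarrow \iota$ is connected; this uniformly covers the case where $x$ itself is maximal by taking $m = x$.

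The one genuinely load-bearing hypothesis, and the point to be careful about, is the finiteness of $\L/p$: it is exactly what guarantees that every instance sits below a maximal one. Without it, a non-maximal instance dominated by no maximal instance would be an isolated object of $x \downarrow \iota$ — it admits no non-identity morphism of $\widetilde{M_k}$ either into or out of it — and finality would fail. Everything else is routine unfolding of the thin-preorder description of $\L/p$ together with the definition of the subcategory $\widetilde{M_k}$, so I do not anticipate any further obstacle.
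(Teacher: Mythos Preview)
Your proof is correct and follows essentially the same route as the paper: both verify the finality condition directly by using $\langle x, \mathrm{id}_x \rangle$ (the paper's $\langle o, \mathrm{id}_o \rangle$) as a hub, lifting each instance to a maximal super-instance, and invoking thinness to make the triangles commute automatically. Your version is a bit more explicit than the paper's about the role of finiteness in guaranteeing that maximal super-instances exist, which the paper leaves implicit.
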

\begin{proof}
We need to show that for any instance $o \in \L/p$, and any two morphisms $e_0 : o \mto o_0$ and $e_1 : o \mto o_1$, there is a zig-zag $z$ in the subcategory and a sequence of morphisms of $\L/p$ from $o$ to each $z_k$ that commutes with each $\overline{z}_k$.
Take $i \in \{0,1\}$.
If $o_i$ is maximal we set $\overline{z}_i = e_i$.
If it is non-maximal, we set $\overline{z}_i = e'_i \circ e_i$ for any $e'_i : o_i \mto o'_i$ to some maximal $o'_i$.
We have just built a valid $z$ of length 2 in the subcategory, the associated sequence of morphisms being simply $\overline{z}_0, id_o, \overline{z}_1$, which trivially commutes as wanted.
\qed
\end{proof}
\end{remark}

\paragraph*{Accretive Rule Systems.}

We are interested in those rule systems where the intermediate results stay inside $\GI$, \ie, such that $\widetilde{T}_p(M \subseteq M')$ are monomorphisms for any $p$ and any $M \subseteq M' \in \widetilde{\L/p}$.
This leads to the following definition of accretive rule systems.

\begin{definition}
An \emph{accretive rule system} $T$ is a rule system such that for any $p\in \GM$, $\widetilde{T}_p$ factors through the forgetful functor $\U : \GI \to \GM$.
\end{definition}

\begin{example}\label{ex:exb-accretive}
    The rule system of Fig.~\ref{fig:figexb} is accretive.
    Focusing on connected $\L/p$, its \lhs implies that $p$ is a connected graph.
    Any $M \in \widetilde{\L/p}$ corresponds to a connected sub-graph of $p$ and is sent to the single vertex graph if it is not empty, or to the empty graph otherwise.
    So for any relation $M \subseteq M'$,
    $\widetilde{T}_p(M \subseteq M')$ is the empty morphism or the identity morphism, and both are monomorphisms.
\end{example}
    
\begin{example}\label{ex:exd-accretive}
    The rule system of Fig.~\ref{fig:figexd} is also accretive.
    Again, $p$ is a connected graph and any $M \in \widetilde{\L/p}$ corresponds to a connected sub-graph of $p$.
    Here $M$ is sent to the same graph with parallel edges simplified into single edge.
    So for any relation $M \subseteq M'$, $M'$ is sent either to the same thing as $M$ when $M'$ only adds more parallel edges, or to a strictly greater graph otherwise.
    In both case $\widetilde{T}_p(M \subseteq M')$ is a monomorphism.
\end{example}
    
\begin{example}\label{ex:exa-accretive}
    On the contrary, the rule system of Fig.~\ref{fig:figexa} is not accretive.
    Consider the cycle $c_3$ of length 3, and the associated $\L/c_3$.
    The latter contains 3 instances of rule $\gamma_1$ and 3  maximal instances of the rule $\gamma_2$.
    Consider the relation $M \subseteq M'$ where $M'$ contains all three maximal instances and $M$ only two of them.
    We have $\widetilde{T}_p(M) = p_3$ and $\widetilde{T}_p(M') = c_3$, but there is no monomorphisms between these two graphs.
\end{example}
    
\begin{example}\label{ex:exc-accretive}
    By the exact same reasoning, the rule system of Fig.~\ref{fig:figexc} is also not accretive.
\end{example}

\subsection{Incremental Rule Systems and Global Transformations}
\label{sec:incremental}

We are interested in giving sufficient conditions for rule systems to be global transformations.
These conditions also imply accretiveness.

Our strategy consists in preventing any super-rule to merge by itself the \rhs of its sub-rules.
In other words, the rule only adds new elements to the \rhs of its sub-rules in an \emph{incremental} way.
A positive expression of this constraint is as follows: if the \rhs of two rules overlap in the \rhs of a common super-rule, this overlap must have been required by some common sub-rules.

\begin{definition}\label{def:well-inter}\label{def:incremental}
Given a rule system $T = \langle \Gam, \L, \R \rangle$, we say that a rule $\gamma \in \Gam$ is incremental if for any two sub-rules $\gamma_1 \xrightarrow{i_1} \gamma \xleftarrow{i_2} \gamma_2$ in $\Gam$, any representable presheaf $\y{c}$, and any $\R(\gamma_1) \overset{x_1}{\leftarrow} \y{c} \overset{x_2}{\to} \R(\gamma_2)$ such that $\R(i_1) \circ x_1 = \R(i_2) \circ x_2$, there are some $\gamma_1 \overset{\pi_1}{\from} \gamma' \overset{\pi_2}{\to} \gamma_2$ and $x: \y{c} \to \R(\gamma')$ such that the following diagrams commute.
\begin{center}
\begin{tikzpicture}[xscale=1.75,yscale=-.8]
    \begin{scope}[xshift=-3cm]
        \node (g)  at (1,0) {$\gamma$};
        \node (g1) at (0,1) {$\gamma_1$};
        \node (g2) at (2,1) {$\gamma_2$};
        \node (gp) at (1,2) {$\gamma'$};
        \node (a) at (1,3) {};
        \draw[->] (g1) to node[above left]{$i_1$} (g);
        \draw[->] (g2) to node[above right]{$i_2$} (g);
        \draw[->,dashed] (gp) to node[below left]{$\pi_1$} (g1);
        \draw[->,dashed] (gp) to node[below right]{$\pi_2$} (g2);
    \end{scope}
    \begin{scope}
        \node (g)  at (1,0) {$\R(\gamma)$};
        \node (g1) at (0,1) {$\R(\gamma_1)$};
        \node (g2) at (2, 1) {$\R(\gamma_2)$};
        \node (gp) at (1,2) {$\R(\gamma')$};
        \node (a)  at (1,3.3) {$\y{c}$};
        \draw[right hook->] (g1) to node[above left]{$\R(i_1)$} (g);
        \draw[left hook->] (g2) to node[above right]{$\R(i_2)$} (g);
        \draw[->,dashed] (gp) to node[below left]{$\R(\pi_1)$} (g1);
        \draw[->,dashed] (gp) to node[below right]{$\R(\pi_2)$} (g2);
        \draw[->] (a) .. controls +(-0.7,-0.25) and +(-0.5,1) .. node[below left]{$x_1$} (g1);
        \draw[->,dashed] (a) to node[right]{$x$} (gp);
        \draw[->] (a) .. controls +(0.7,-0.25) and +(0.5,1) .. node[below right]{$x_2$} (g2);
    \end{scope}
\end{tikzpicture}
\end{center}
A rule system $T$ is said \emph{incremental} if every $\gamma \in \Gam$ is incremental.
\end{definition}

The Sierpinski gasket rule system (Fig.~\ref{fig:sierpinskyRules}) is incremental.
The only non-trivial case is when the sub-rules $\gamma_1$ and $\gamma_2$ of Def.~\ref{def:incremental} are set to the edge rule of Fig.~\ref{fig:sierpinskyRules} and $\gamma$ to be the triangle rule, such that the \rhs of $\gamma_1$ and $\gamma_2$ overlap on a common vertex in $\R(\gamma)$ (morphisms $x_1$ and $x_2$ of Def.~\ref{def:incremental}) .
This vertex is nothing but the image of the vertex of $\L(\gamma)$ common to the inclusions of $\L(\gamma_1)$ and $\L(\gamma_2)$ in $\L(\gamma)$.
This invites us to set $\gamma'$ to the vertex rule of Fig.~\ref{fig:sierpinskyRules} and complete the requirements of Def.~\ref{def:incremental} to get incrementality.

The main constraint enforced by the incrementality criterion is that any merge is always required by sub-rules as stated by the following lemma.

\begin{lemma}
\label{lemma:zig-zag-reduce-0}
Given an incremental rule system $T = \langle \Gam, \L, \R \rangle$, a zig-zag $z$ in $\Gam$, a representable presheaf $\y{c}$ and two morphisms $x_0: \y{c} \to \R(z_0)$, $x_{|z|}: \y{c} \to \R(z_{|z|})$ such that $z$ links $x_0$ and $x_{|z|}$, \ie there is a cone $\langle x_i: \y{c} \to \R(z_i) \rangle_{0 \leq i \leq |z|}$ that commutes with $\R(z)$.
Then there is a zig-zag $z'$ in $\Gam$ of the form $z_0 = z'_0 \from z'_1 \to z'_2 = z_{|z|}$ that also links $x_0$ and $x_{|z|}$.
\begin{center}
\begin{tikzpicture}[xscale=1,yscale=0.6]
    \node (z0) at (0,0) {$\R(z_0)$};
    \node (z1a) at (1.7,0) {};
    \node (z1) at (2,0) {$\ldots$};
    \node (z11) at (2,-0.5) {\tiny $\ldots$};
    \node (z1b) at (2.3,0) {};
    \node (zz) at (4,0) {$\R(z_{|z|})$};
    \node (a) at (2,-1.5) {$\y{c}$};
    \draw[-] (z0) to node[above]{\small $\R(\overline{z}_0)$} (z1a);
    \draw[-] (z1b) to node[above]{\small $\R(\overline{z}_{|z|-1})$} (zz);
    \draw[->] (a) to node[below left]{$x_0$} (z0);
    \draw[->] (a) to node[below left]{} (z1a);
    \draw[->] (a) to node[below left]{} (z1b);
    \draw[->] (a) to node[below right]{$x_{|z|}$} (zz);
    
    \node (y0) at (6,0) {$\R(z_0)$};
    \node (y1) at (8,0) {$\R(z'_1)$};
    \node (yz) at (10,0) {$\R(z_{|z|})$};
    \node (aa) at (8,-1.5) {$\y{c}$};
    \draw[left hook->] (y1) to node[above]{\small $\R(\overline{z'}_0)$} (y0);
    \draw[right hook->] (y1) to node[above]{\small $\R(\overline{z'}_1)$} (yz);
    \draw[->] (aa) to node[below left]{\small $x_0$} (y0);
    \draw[->] (aa) to node[below left]{} (y1);
    \draw[->] (aa) to node[below right]{\small $x_{|z|}$} (yz);
\end{tikzpicture}
\end{center}
\end{lemma}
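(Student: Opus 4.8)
The plan is to argue by strong induction on the length $|z|$ of the zig-zag, using incrementality to absorb one ``peak'' at a time while keeping the two ends $z_0, z_{|z|}$ and their cone values $x_0, x_{|z|}$ untouched.

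\emph{Base cases ($|z|\le 2$).} For $|z|=0$ we have $z_0 = z_{|z|}$ and $x_0 = x_{|z|}$, so $z_0 \xleftarrow{\mathrm{id}} z_0 \xrightarrow{\mathrm{id}} z_0$ with apex map $x_0$ is the required span (recall $\R$ preserves identities, which are monomorphisms, so the hooks in the statement are automatic). For $|z|=1$ the single arrow $\overline z_0$ already carries one of the two needed orientations, and completing the other side with an identity yields the span. For $|z|=2$, either $z$ is already a span — nothing to do — or $z$ is a cospan $z_0 \xrightarrow{\overline z_0} z_1 \xleftarrow{\overline z_1} z_2$; then the cone witnessing the link gives $\R(\overline z_0)\circ x_0 = x_1 = \R(\overline z_1)\circ x_2$, which is exactly the hypothesis of Definition~\ref{def:incremental} for the incremental rule $z_1$ with sub-rules $z_0 \xrightarrow{\overline z_0} z_1 \xleftarrow{\overline z_1} z_2$, so it outputs a rule $z'_1$, a span $z_0 \xleftarrow{\pi_1} z'_1 \xrightarrow{\pi_2} z_2$ in $\Gam$, and $x'_1 : \y c \to \R(z'_1)$ with $\R(\pi_1)\circ x'_1 = x_0$ and $\R(\pi_2)\circ x'_1 = x_2$, which is the desired conclusion (the $\R(\pi_j)$ being monomorphisms as morphisms of $\GI$).

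\emph{Inductive step ($|z|\ge 3$).} A strictly alternating zig-zag of length $\ge 3$ always has an interior peak, \ie an index $1\le i\le|z|-1$ with $\overline z_{i-1}:z_{i-1}\to z_i$ and $\overline z_i:z_{i+1}\to z_i$. At such a peak the cone gives $\R(\overline z_{i-1})\circ x_{i-1} = x_i = \R(\overline z_i)\circ x_{i+1}$, so incrementality of $z_i$ furnishes a rule $w$, a span $z_{i-1} \xleftarrow{\pi_1} w \xrightarrow{\pi_2} z_{i+1}$, and $x_w : \y c \to \R(w)$ with $\R(\pi_1)\circ x_w = x_{i-1}$, $\R(\pi_2)\circ x_w = x_{i+1}$. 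Substituting $z_{i-1} \xleftarrow{\pi_1} w \xrightarrow{\pi_2} z_{i+1}$ for the segment $z_{i-1}\to z_i\leftarrow z_{i+1}$ inside $z$ places, on each side where a neighbouring arrow $\overline z_{i-2}$ or $\overline z_{i+1}$ exists — and at least one does, since $|z|\ge 3$ forces $i\ge 2$ or $i\le|z|-2$ — two consecutive co-oriented arrows ($\pi_1$ with $\overline z_{i-2}$, resp.\ $\pi_2$ with $\overline z_{i+1}$); composing each such pair in $\Gam$ absorbs the corresponding interior object and restores the alternation, producing a genuine zig-zag $\hat z$ with $\hat z_0 = z_0$, last object $z_{|z|}$, and $|\hat z| < |z|$. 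Equipping $\hat z$ with the cone obtained from $\langle x_j\rangle$ by removing the values at the deleted objects and inserting $x_w$ at $w$ makes $\hat z$ link $x_0$ and $x_{|z|}$ — the only new commutations to check are those of the one or two composite arrows, which follow from functoriality of $\R$ and the equations above. The induction hypothesis applied to $\hat z$ now delivers the required length-two span.

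\emph{Main obstacle.} The conceptual content lies entirely in the single application of incrementality at a peak; the real work is the bookkeeping in the inductive step — checking that substitution-plus-composition keeps a strictly alternating zig-zag, that the ends and their cone values are never disturbed, and that the length strictly decreases — with a little extra care in the boundary cases $i=1$, $i=|z|-1$, and $|z|=2$, where only one of the two compositions (or neither) takes place.
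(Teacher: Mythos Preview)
Your proof is correct and uses the same core idea as the paper: induct on the length of the zig-zag and invoke incrementality to collapse a cospan into a span. The organization differs slightly: the paper peels off the \emph{first} arrow $\overline z_0$ and applies the induction hypothesis to the tail $z_1,\ldots,z_{|z|}$, then either composes (if $\overline z_0:z_1\to z_0$) or applies incrementality once to the resulting cospan $z_0\to z_1\leftarrow y'_1$ (if $\overline z_0:z_0\to z_1$). This avoids your search for an interior peak and the attendant boundary bookkeeping, yielding a single base case $|z|=0$ and a two-case inductive step; your version is equally valid but, as you note yourself, carries a little more case analysis.
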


\begin{proof}
This is proved by induction on the length of $z$.
We show the base case with $|z| = 0$ by taking $z' = \langle id_{z_0}, id_{z_0} \rangle$.
For the induction case we assume that the proposition is true for any zig-zag of size $k$ and take $z$ of size $k+1$.
Then we have two cases that depends on the direction of the first morphism of $z$:
\begin{itemize}

\item If $\overline{z}_0: z_1 \to z_0$ we can apply the induction hypothesis on the zig-zag $y = \langle \overline{z}_1, \dots \overline{z}_{k+1}\rangle$ to get a zig-zag $y'$ of the form $z_1 = y'_0 \from y'_1 \to y'_2 = z_{k+1}$ that links $x_1$ and $x_{k+1}$.
Then we take $z'= \langle \overline{z}_0 \circ \overline{y'}_0, \overline{y'}_1 \rangle$ to conclude this case.

\item If $\overline{z}_0: z_0 \to z_1$ we first apply the induction hypothesis on $y = \langle \overline{z}_1, \dots \overline{z}_{k+1}\rangle$ to get a zig-zag $y'$ of the form $z_1 = y'_0 \from y'_1 \to y'_2 = z_{k+1}$ that links $x_1$ and $x_{k+1}$.
Let $x' : \y{c} \to \R(y'_1)$ be the induced linking morphism such that $\R(\overline{z}_0) \circ x_0 = \R(\overline{y'}_0) \circ x'$.
Applying Def.~\ref{def:well-inter} on this square, we get a quadruplet $\langle \gamma' \in \Gam, \pi_1 : \gamma' \to z_0, \pi_2: \gamma' \to y'_1, x : \y{c} \to \R(\gamma') \rangle$ such that $\R(\overline{z}_0) \circ \R(\pi_1) = \R(\overline{y'}_0) \circ \R(\pi_2)$, $x_0 = \R(\pi_1) \circ x$, and $x' = \R(\pi_2) \circ x$.
Here the wanted $z'$ is $\langle \pi_1, \overline{y'}_1 \circ \pi_2 \rangle$.\qed
\end{itemize}
\end{proof}

Consider any monomorphism $h: p \mto p'$ of presheaves such that some merge is required by the computation of $\overline{T}(p')$ between some elements of \rhs instances also involved by $\overline{T}(p)$.
Lemma~\ref{lemma:zig-zag-reduce-0} ensures that it is required by a sub-rule which must also be instantiated by $\overline{T}(p)$ so that the merge is also required by the computation of $\overline{T}(p)$.
In other words, $\overline{T}(h)$ is a monomorphism as established by the following theorem.

\begin{theorem}\label{thm:well-defined-gt-extends}
Any incremental rule system is a global transformation.
\end{theorem}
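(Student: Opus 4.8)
The plan is to fix an incremental rule system $T = \langle \Gam, \L, \R \rangle$ and show that for every monomorphism $h : p \mto p'$ in $\GI$ the mediating morphism $\overline{T}(h)$ described in Remark~\ref{rem:overlineT} is itself a monomorphism; this is exactly the statement that $\overline{T}$ factors through $\U$, i.e.\ that $T$ is a global transformation. Since $\overline{T}(p)$ is a colimit in $\GM = \Set^{\C^{\mathbf{op}}}$, hence computed pointwise, and a presheaf morphism is a monomorphism iff it is injective at every object $c \in \C$, it suffices to test injectivity of $\overline{T}(h)$ against elements, and by Yoneda against morphisms $x, y : \y{c} \to \overline{T}(p)$ for each representable $\y{c}$. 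So the core goal reduces to: if $x, y : \y{c} \to \overline{T}(p)$ have $\overline{T}(h) \circ x = \overline{T}(h) \circ y$ in $\overline{T}(p')$, then $x = y$.

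First I would unpack both sides with Proposition~\ref{prop:colim-graphi}. Write $D = \D_T(p)$ and $D' = \D_T(p')$. The element $x$ factors as $x = \overline{T}(p)_{\langle \gamma_x, f_x \rangle} \circ x'$ for some instance $\langle \gamma_x, f_x \rangle \in \L/p$ and some $x' : \y{c} \to \R(\gamma_x)$, and similarly $y = \overline{T}(p)_{\langle \gamma_y, f_y \rangle} \circ y'$. Pushing along $h$, the instance $\langle \gamma_x, f_x \rangle$ becomes $\langle \gamma_x, h \circ f_x \rangle \in \L/p'$ and the cocone component used is $\overline{T}(p')_{\langle \gamma_x, h\circ f_x\rangle}$, by the definition of $\overline{T}(h)$ in Remark~\ref{rem:overlineT}. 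The hypothesis $\overline{T}(h)\circ x = \overline{T}(h)\circ y$ then says that $\langle \langle \gamma_x, h\circ f_x\rangle, x'\rangle$ and $\langle \langle \gamma_y, h\circ f_y\rangle, y'\rangle$ name the same element of $\overline{T}(p')$, so by the second clause of Proposition~\ref{prop:colim-graphi} there is a zig-zag $z$ in $\L/p'$ linking $x'$ and $y'$ through $D'$. The key observation is that a zig-zag in a comma category $\L/p'$ is, after applying $\Proj$, a zig-zag $\Proj(z)$ in $\Gam$, and the linking data for $x', y'$ through $D' = \U\circ\R\circ\Proj[\L/p']$ is precisely a commuting cone $\langle x_i : \y{c} \to \R(\Proj(z_i))\rangle$ over $\R(\Proj(z))$ — exactly the input to Lemma~\ref{lemma:zig-zag-reduce-0}. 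Applying that lemma, we may replace $z$ by a short zig-zag $\gamma_x \xleftarrow{\pi_1} \gamma' \xrightarrow{\pi_2} \gamma_y$ in $\Gam$ with a mediating $w : \y{c} \to \R(\gamma')$ satisfying $x' = \R(\pi_1)\circ w$ and $y' = \R(\pi_2)\circ w$.

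Next I would lift this short zig-zag from $\Gam$ back to $\L/p$ — not merely $\L/p'$. The intermediate vertices of the original zig-zag $z$ in $\L/p'$ carry monomorphisms into $p'$, and $\gamma'$ comes equipped with maps $\pi_1 : \gamma' \to \gamma_x$, $\pi_2 : \gamma' \to \gamma_y$ of $\Gam$; to get an instance of $\gamma'$ over $p$ I would take the composite $\L(\gamma') \xrightarrow{\L(\pi_1)} \L(\gamma_x) \xrightarrow{f_x} p$ (and symmetrically via $\pi_2$, $f_y$), and I must check these two composites agree, so that $\langle \gamma', f_x \circ \L(\pi_1)\rangle$ is a well-defined object of $\L/p$ with morphisms $\langle \pi_1, f_x\rangle$ into $\langle \gamma_x, f_x\rangle$ and $\langle \pi_2, f_y\rangle$ into $\langle \gamma_y, f_y\rangle$. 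The equality $f_x\circ\L(\pi_1) = f_y\circ\L(\pi_2)$ should follow by pushing along the mono $h$ — it holds in $\L/p'$ because $\L(\pi_1),\L(\pi_2)$ fit into the recovered short zig-zag there, and $h$ monic lets us cancel — together with thinness of $\L/p'$ (Proposition~\ref{prop:comma_thin}) and of $\L/p$. Once this short zig-zag lives in $\L/p$, naturality of the colimit cocone $\overline{T}(p)$ over it gives $\overline{T}(p)_{\langle\gamma_x,f_x\rangle}\circ\R(\pi_1) = \overline{T}(p)_{\langle\gamma',\dots\rangle} = \overline{T}(p)_{\langle\gamma_y,f_y\rangle}\circ\R(\pi_2)$, and composing with $w$ yields $x = \overline{T}(p)_{\langle\gamma_x,f_x\rangle}\circ\R(\pi_1)\circ w = \overline{T}(p)_{\langle\gamma_y,f_y\rangle}\circ\R(\pi_2)\circ w = y$, as desired.

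The main obstacle I anticipate is precisely this "lifting along $h$" step: verifying that the short zig-zag produced by Lemma~\ref{lemma:zig-zag-reduce-0} — which a priori is assembled from data in $\Gam$ and in $\L/p'$ — actually descends to a valid configuration in $\L/p$, i.e.\ that the relevant triangles of presheaf morphisms into $p$ (not just into $p'$) commute. This is where monicity of $h$ is used in an essential way, and where the thinness of the comma categories (Proposition~\ref{prop:comma_thin}) does the bookkeeping that would otherwise require tracking explicit morphisms. Everything else is either a pointwise/Yoneda reduction, a direct invocation of Proposition~\ref{prop:colim-graphi}, or a citation of Lemma~\ref{lemma:zig-zag-reduce-0}, together with the routine remark that accretiveness follows by the same argument applied to inclusions $M \subseteq M'$ of partial decompositions, using that $\widetilde{T}_p$ restricts colimits over subdiagrams of $D$ in the same manner.
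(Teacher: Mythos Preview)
Your plan follows the paper's proof essentially step for step: test monicity of $\overline{T}(h)$ on representables via Yoneda, lift elements through Proposition~\ref{prop:colim-graphi}, obtain a zig-zag in $\L/p'$, project it to $\Gam$, apply Lemma~\ref{lemma:zig-zag-reduce-0} to shorten it to $\gamma_x \xleftarrow{\pi_1} \gamma' \xrightarrow{\pi_2} \gamma_y$, lift back to $\L/p$, and conclude from the cocone equations.

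You are right that the lifting step---establishing $f_x \circ \L(\pi_1) = f_y \circ \L(\pi_2)$---is the one place requiring care; the paper in fact simply asserts this equality. Your proposed justification, however, is not quite complete. The \emph{statement} of Lemma~\ref{lemma:zig-zag-reduce-0} constrains only the $\R$-side, so it does not by itself place the short zig-zag into $\L/p'$; and thinness (Proposition~\ref{prop:comma_thin}) yields uniqueness of morphisms between given objects, not the equality of the two candidate objects $\langle \gamma', h \circ f_x \circ \L(\pi_1)\rangle$ and $\langle \gamma', h \circ f_y \circ \L(\pi_2)\rangle$ that you actually need. What closes the gap is that the \emph{proof} of the lemma uses, at each inductive step, the commutativity of the left square in Definition~\ref{def:incremental} (that is, $i_1 \circ \pi_1 = i_2 \circ \pi_2$ in $\Gam$); tracking this through the induction shows the produced $\gamma'$ maps in $\Gam$ into a vertex of the original zig-zag compatibly with both $\pi_k$, so composing with that vertex's structure map into $p'$ gives $h \circ f_x \circ \L(\pi_1) = h \circ f_y \circ \L(\pi_2)$, after which your cancellation by the mono $h$ finishes. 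In short: strengthen the lemma (or invoke its proof rather than its statement) to record this $\Gam$-level compatibility, and drop the appeal to thinness.
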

\begin{proof}
Consider the setting of Remark~\ref{rem:overlineT}.
We need to show that $\overline{T}(f)$ is a monomorphism for $f: p \mto p'$.
For any two morphisms $x_1, x_2: \y{c} \to \overline{T}(p)$ for any $c \in \C$ such that $\overline{T}(f) \circ x_1 = \overline{T}(f) \circ x_2$.
We are left to show that $x_1 = x_2$.

Take $k \in \{1,2\}$.
By Prop.~\ref{prop:colim-graphi}, $x_k$ factors through some cocone component of $\overline{T}(p)$.
Say $x_k = \overline{T}(f)_{\langle \gamma_k, m_k \rangle} \circ y_k$ where $\langle \gamma_k, m_k \rangle$ is a object of $\L/p$.
So $\overline{T}(f) \circ x_k = \overline{T}(f) \circ \overline{T}(p)_{\langle \gamma_k, m_k \rangle} \circ y_k = \overline{T}(p')_{\langle \gamma_k, f \circ m_k \rangle} \circ y_k$.
Since $\overline{T}(f) \circ x_1 = \overline{T}(f) \circ x_2$ then $\overline{T}(p')_{\langle \gamma_1, f \circ m_1 \rangle} \circ y_1 = \overline{T}(p')_{\langle \gamma_2, f \circ m_2 \rangle} \circ y_2$.
Using Prop.~\ref{prop:colim-graphi} on the latter equality into the colimit $\overline{T}(p')$, there is a zig-zag $z$ in $\L/p'$ from $\langle \gamma_1, f \circ m_1 \rangle$ to $\langle \gamma_2, f \circ m_2 \rangle$ that links $y_1$ and $y_2$ through $\D(p')$.
The zig-zag $\Proj[\L/p'](z)$ in $\Gamma$ obviously links $y_1$ and $y_2$ through $\U \circ \R$.
Since the rule system $T$ is incremental, we apply Lemma~\ref{lemma:zig-zag-reduce-0} to obtain a zig-zag $z'$ in $\Gamma$ of the form $\gamma_1 = z'_0 \from z'_1 \to z'_2 = \gamma_2$ that links $y_1$ and $y_2$ through $\U \circ \R$.
Let us define the zig-zag $z''$ in $\L/p$ to be $\langle \gamma_1, m_1 \rangle \mfrom \langle z'_1, h \rangle \mto \langle \gamma_2, m_2 \rangle$ where $h = m_1 \circ \L(\overline{z'}_1) = m_2 \circ \L(\overline{z'}_2)$ and $\overline{z''}_j = \langle \overline{z'}_k, m_k \rangle$ for $k \in \{ 1 , 2\}$.
Clearly, $\Proj[\L/p](z'') = z'$, so $z''$ links $y_1$ and $y_2$ through $\D(p)$.
By commutation properties of cocones over $\D(p)$, we have that $\overline{T}(p)_{\langle \gamma_1, m_1 \rangle} \circ y_1 = \overline{T}(p)_{\langle \gamma_2, m_2 \rangle} \circ y_2$, which implies $x_1 = x_2$ as wanted.
\qed
\end{proof}

The previous remark also applies for intermediate results leading to the following theorem concerning accretiveness.

\begin{theorem}\label{thm:well-defined-accretive}
    Any incremental rule system is accretive.
\end{theorem}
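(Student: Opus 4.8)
The plan is to replay the proof of Theorem~\ref{thm:well-defined-gt-extends} almost verbatim, substituting the intermediate functor $\widetilde{T}_p$ for $\overline{T}$ and the subcategories $\widetilde M \subseteq \widetilde{M'}$ of $\L/p$ for the pair $\L/p \subseteq \L/p'$. Fix $p\in\GM$ and $M\subseteq M'\in\widetilde{\L/p}$. Since every morphism into $M$ is in particular a morphism into $M'$, $\widetilde M$ is a subcategory of $\widetilde{M'}$, and, just as in Remark~\ref{rem:overlineT}, $\widetilde{T}_p(M\subseteq M')$ is the mediating morphism from the colimit $\widetilde{T}_p(M)=\Colim(D\restriction\widetilde M)$ (writing $D$ for $\D_T(p)$ as before) to the restriction of the cocone $\widetilde{T}_p(M')$ along $\widetilde M$. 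To prove this arrow monic I would start from two morphisms $x_1,x_2:\y{c}\to\widetilde{T}_p(M)$, for an arbitrary $c\in\C$, satisfying $\widetilde{T}_p(M\subseteq M')\circ x_1=\widetilde{T}_p(M\subseteq M')\circ x_2$, and aim at $x_1=x_2$.

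First I would use Proposition~\ref{prop:colim-graphi} on the small diagram $D\restriction\widetilde M$ to factor each $x_k$ as $\widetilde{T}_p(M)_{o_k}\circ y_k$ for some object $o_k$ of $\widetilde M$ and $y_k:\y{c}\to D(o_k)$. If $o_k$ is not already a maximal instance lying in $M$, it admits a morphism $e$ into some $o_k'\in M$; replacing $o_k$ by $o_k'$, $y_k$ by $D(e)\circ y_k$, and invoking cocone commutation, I may assume $o_k=\langle\gamma_k,m_k\rangle\in M$. Then, using that $\widetilde{T}_p(M\subseteq M')$ mediates, i.e. $\widetilde{T}_p(M\subseteq M')\circ\widetilde{T}_p(M)_{o}=\widetilde{T}_p(M')_{o}$ for $o\in\widetilde M$, the hypothesis becomes $\widetilde{T}_p(M')_{\langle\gamma_1,m_1\rangle}\circ y_1=\widetilde{T}_p(M')_{\langle\gamma_2,m_2\rangle}\circ y_2$. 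Applying Proposition~\ref{prop:colim-graphi} again, this time to $\widetilde{T}_p(M')=\Colim(D\restriction\widetilde{M'})$, yields a zig-zag in $\widetilde{M'}$ from $\langle\gamma_1,m_1\rangle$ to $\langle\gamma_2,m_2\rangle$ that links $y_1$ and $y_2$ through $D$; its image under $\Proj[\L/p]$ is a zig-zag in $\Gam$ linking $y_1$ and $y_2$ through $\U\circ\R$, so incrementality lets me invoke Lemma~\ref{lemma:zig-zag-reduce-0} to collapse it to a short zig-zag $\gamma_1=z'_0\from z'_1\to z'_2=\gamma_2$ still linking $y_1$ and $y_2$.

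To finish I would lift this short zig-zag back into $\widetilde M$: exactly as in the proof of Theorem~\ref{thm:well-defined-gt-extends}, set $h=m_1\circ\L(\overline{z'}_0)=m_2\circ\L(\overline{z'}_1)$, so that $\langle z'_1,h\rangle$ is an instance of $\L/p$ equipped with morphisms into both $\langle\gamma_1,m_1\rangle$ and $\langle\gamma_2,m_2\rangle$, which are elements of $M$. Hence $\langle\gamma_1,m_1\rangle\mfrom\langle z'_1,h\rangle\mto\langle\gamma_2,m_2\rangle$ is a genuine zig-zag of $\widetilde M$ linking $y_1$ and $y_2$ through $D$, and commutation of the colimit cocone $\widetilde{T}_p(M)$ over $D\restriction\widetilde M$ forces $\widetilde{T}_p(M)_{\langle\gamma_1,m_1\rangle}\circ y_1=\widetilde{T}_p(M)_{\langle\gamma_2,m_2\rangle}\circ y_2$, i.e. $x_1=x_2$. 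Since $p$, $M$ and $M'$ were arbitrary, $\widetilde{T}_p$ factors through $\U$, as required.

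The routine part is the chain of applications of Proposition~\ref{prop:colim-graphi} and Lemma~\ref{lemma:zig-zag-reduce-0}, which is identical to the global-transformation case. The delicate point — and where I expect the main obstacle — is the bookkeeping of ambient categories: incrementality works in $\Gam$, the colimit factorizations live in $\widetilde{M'}$, but the final conclusion needs the reduced zig-zag to sit inside $\widetilde M$, whose morphisms are constrained to point into $M$. The normalisation step forcing each $o_k$ to be a maximal instance of $M$ is precisely what guarantees that $\langle z'_1,h\rangle$ is an object of $\widetilde M$ with admissible legs; once that is secured, the argument is a transcription of the proof of Theorem~\ref{thm:well-defined-gt-extends}.
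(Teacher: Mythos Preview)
Your proposal is correct and is exactly the adaptation the paper intends: its own proof of Theorem~\ref{thm:well-defined-accretive} is the single sentence ``The proof is similar to the proof of Theorem~\ref{thm:well-defined-gt-extends}.'' You have simply spelled out that similarity, and the one genuine bookkeeping adjustment you flag --- normalising the factoring objects $o_k$ to lie in $M$ so that the lifted short zig-zag $\langle \gamma_1,m_1\rangle \mfrom \langle z'_1,h\rangle \mto \langle \gamma_2,m_2\rangle$ has both legs in $\widetilde M$ --- is precisely the extra care needed and is handled correctly.
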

\begin{proof}
    The proof is similar to the proof of Theorem~\ref{thm:well-defined-gt-extends}.
    \qed
\end{proof}

However, the converses of these theorems do not hold so incrementality is sufficient but not necessary as illustrated by the following examples.

\begin{example}
    The rule system of Fig.~\ref{fig:figexc} is a global transformation as explained in Example~\ref{ex:exc-gt}, but not incremental.
    Consider $e_1: \gamma_1 \to \gamma_2$ be the plain arrow into $\gamma_2$ and $e_2: \gamma_1 \to \gamma_2$ the dashed arrow into $\gamma_2$.
    The cospan $\gamma_1 \mto \gamma_2 \mfrom \gamma_1$ is such that $h_1: d_1 \mto \R(\gamma_1)$ and $h_2: d_1 \mto \R(\gamma_1)$ such that $\R(e_1) \circ h_1 = \R(e_2) \circ h_2$ but there is no rule $\gamma'$ to ensure the incrementality condition.
\end{example}

\begin{example}
    Similarly, the rule system of Fig.~\ref{fig:figexd} is a global transformation (Example~\ref{ex:exd-gt}) but is not incremental.
    Consider $e_1: \gamma_2 \to \gamma_3$ be the plain arrow into $\gamma_3$ and $e_2: \gamma_2 \to \gamma_3$ the dashed arrow into $\gamma_3$.
    The cospan $\gamma_2 \mto \gamma_3 \mfrom \gamma_2$ is such that $h_1: l_1 \mto \R(\gamma_2)$ and $h_2: l_1 \mto \R(\gamma_2)$ such that $\R(e_1) \circ h_1 = \R(e_2) \circ h_2$ but there is no rule $\gamma'$ to ensure the incrementality condition.
\end{example}

\begin{example}
    The rule system of Fig.~\ref{fig:figexb} is accretive (Example~\ref{ex:exb-accretive}) but non-incremental. Consider $e_1: \gamma_1 \to \gamma_2$ the plain arrow into $\gamma_3$ and $e_2: \gamma_1 \to \gamma_2$ the dashed arrow into $\gamma_3$.
    Observe that for the cospan $\gamma_1 \mto \gamma_2 \mfrom \gamma_1$ we have $h_1: d_1 \mto \R(\gamma_2)$ and $h_2: d_1 \mto \R(\gamma_2)$ such that $\R(e_1) \circ h_1 = \R(e_2) \circ h_2$ but there is no rule $\gamma'$ to ensure the incremental condition.
\end{example}

\begin{example}
    Similarly, the rule system of Fig.~\ref{fig:figexd} is accretive (Example~\ref{ex:exd-accretive}) but non-incremental.
\end{example}

Summarizing the properties collected with the four examples of Fig.~\ref{fig:figex} and with the one of Fig.~\ref{fig:sierpinskyRules} in a table, we can see that being a global transformation and being accretive are orthogonal properties, but incrementality forces the two.

\begin{center}
\begin{tabular}{|r|c|c|c|c|}
    \cline{2-5}
    \multicolumn{1}{c|}{}
    &
    \multicolumn{2}{c|}{non-incr.} &
    \multicolumn{2}{c|}{incr.} \\
    \cline{2-5}
    \multicolumn{1}{c|}{}
    &
    non-G.T. & G.T. &
    non-G.T. & G.T. \\
    \hline
    {\ }non-accretive{\ } &
    {\ }ex. Fig.~\ref{fig:figexa}{\ } & {\ }ex. Fig.~\ref{fig:figexc}{\ } & {\ }None, Thm.~\ref{thm:well-defined-gt-extends}/\ref{thm:well-defined-accretive} {\ } & {\ } None, Thm.~\ref{thm:well-defined-accretive} {\ } \\
    \hline
    accretive{\ } &
    {\ }ex. Fig.~\ref{fig:figexb}{\ } & {\ }ex. Fig.~\ref{fig:figexd} {\ }&{\ } None, Thm.~\ref{thm:well-defined-gt-extends} {\ }&{\ } Sierpenski {\ }\\
    \hline
\end{tabular}
\end{center}

\section{Computing Accretive Global Transformations}
\label{sec:algo}

This section is devoted to the description of an effective implementation of the online procedure considered in Section~\ref{sec:accretive}.
In this context, we focus on incremental global transformations.
We first explain how the categorical concepts of Section~\ref{sec:online} are represented computationally (Section~\ref{sec:rep}).
This is followed by a detailed presentation of the algorithm (Section~\ref{sec:actual-algo}).

\subsection{Categorical Constructions Computationally}
\label{sec:rep}

Up to now, we exposed everything formally using categorical concepts:
the category of presheaves $\GI$,
finite incremental rule systems $T = \langle \Gam, \L, \R \rangle$, and
the comma category $\L/p$ for some finite presheaf $p$.
We now describe their computational counterparts.
First, let us introduce some notations used in the algorithm:
\begin{itemize}

\item $X^*$ stands for the set of finite words on the alphabet $X$; the empty word is denoted by $\varepsilon$ and the concatenation by $u \cdot v$ for any two words $u, v \in X^*$.

\item $\coprod_{a \in A} B(a)$ is the set of pairs $(a, b)$ where $a \in A$ and $b \in B(a)$.

\item $\prod_{a \in A} B(a)$ is the set of functions $f: A \to \bigcup_{a \in A} B(a)$ such that for any $a \in A$, $f(a) \in B(a)$. Such functions are also manipulated as sets of pairs. Those pairs are written $a \mapsto f(a)$.

\end{itemize}

\paragraph*{The Category of Presheaves with Monomorphisms.}

The category $\GI$ is the formal abstraction for a library providing a data structure suitably captured by presheaves (like sets, graphs, Petri nets, etc.) and how an instance of that data structure (presheaves) is part of another one (monomorphisms).
Two functions $\-- \circ \--$ and $\-- = \-- $ need to be provided to compute composition and equality test of sub-parts.
The library also needs to come with a pattern matching procedure taking as input two finite presheaves $p$ and $p'$ and returning the set $\Hom{\GI}{p}{p'}$ of occurrences of $p$ in $p'$.
Finally, the library is assumed to provide a particular construction operation called $\mathtt{generalizedPushout}(p_1,p_2,S)$ computing the generalized pushout, \ie, the colimit of the collection of spans $S$, each span being represented as a triplet $(p \in \GI, f_1 : p \mto p_1, f_2 : p \mto p_2)$.
The resulting colimit is returned as a triplet $(c \in \GI, g_1 : p_1 \mto c, g_2 : p_2 \mto c)$ where $c$ is the apex and $g_1, g_2$ the corresponding component morphisms.

\paragraph*{Finite Incremental Rule System.}

A finite rule system is described as a finite graph whose vertices are rules $l \Rightarrow r$ as pairs of presheaves and edges are pairs of monomorphisms $\langle i_l : l_1 \mto l_2, i_r : r_1 \mto r_2 \rangle$.
Functors $\L$ and $\R$ return the first and second components of these pairs respectively.
At the semantic level, $\Gam$ is the category generated from this graph.
Finally, incrementality as presented in Def.~\ref{def:incremental} is clearly decidable on finite rule systems, giving rise to an accretive global transformation by Theorems~\ref{thm:well-defined-gt-extends} and~\ref{thm:well-defined-accretive}.

\paragraph*{The Category of Instances.}

By Prop.~\ref{prop:comma_thin}, $\L/p$ is a preordered set, but in our implementation,
any time an instance is matched, all of its isomorphic instances are taken care of
at the same time.
This corresponds informally to taking the poset of equivalence classes of the preordered set.
Also, by Prop.~\ref{prop:comma_final}, morphisms between non-maximal instances can be ignored.
All in all, $\L/p$ is adequately thought of as an abstract undirected bipartite graph that we call \emph{the network}.

Finally, the $\L/p$ is never entirely represented in memory (neither is the cocone associated to the resulting colimit).
A first instance is constructed, and the others are built from neighbor to neighbor through the operation $\coprod_{n}\Hom{\L/p}{n}{m}$ and $\coprod_{m}\Hom{\L/p}{n}{m}$.
The former lists the sub-instances of $m$ and the latter lists the super-instances of $n$.
For the  ``incoming neighbors'' or sub-instances $\coprod_{n}\Hom{\L/p}{n}{\langle \gamma', f' \rangle}$, they are specified as
$$
\{ (\langle \gamma, f' \circ \L(e) \rangle, \langle e, f' \rangle)
\mid e : \gamma \to \gamma' \}.
$$
This corresponds simply to the composition $\-- \circ \--$ in $\GI$ discussed earlier.
On the contrary, ``outgoing neighbors'' or super-instances $\coprod_{m}\Hom{\L/p}{\langle \gamma', f' \rangle}{m}$ correspond to extensions and are obtained by pattern matching.
\begin{eqnarray*}
    \{ \langle e', f'' \rangle : \langle \gamma', f' \rangle \to \langle \gamma'', f'' \rangle
    \mid \\ e' : \gamma' \to \gamma'', && f'' \in \Hom{\GI}{\L(\gamma'')}{p}%
    \text{ s.t. } f' = f'' \circ \L(e') \}.    
\end{eqnarray*}
Notice that these two specifications are obtained by simply unfolding the definition of the morphisms of the comma category.
Also, the set of incoming morphisms $e : \gamma \to \gamma'$ and outgoing morphisms $e' : \gamma' \to \gamma''$ in $\Gam$ are directly available in the graph representation of the rule system $T$ as said earlier.
These operations are used to implement a breadth-first algorithm, earlier instances being dropped away as soon as their maximal super-instances have been found.

\subsection{The Global Transformation Algorithm}
\label{sec:actual-algo}

\begin{algorithm}[t]
    \SetEndCharOfAlgoLine{}
    \SetKw{Let}{let}
    \SetKwInput{KwInput}{Input}
    \SetKwInput{KwVar}{Variable}
    \KwInput{$T : \text{rule system on \GI}$}
    \KwInput{$p : \GI$}
    \KwVar{$P : \GI$}
    \KwVar{$N : (\L/p)^*$}
    \KwVar{$E \subseteq \coprod_{n \in N}\coprod_{m}\Hom{\L/p}{n}{m}$}
    \KwVar{$C : \prod_{n \in N}\Hom{\GI}{D(n)}{P}$}
    \Let $n = \mathtt{findAnyMinimal}(T,g)$, \ie, any minimal element in $\L/p$\;
    \Let $E = \emptyset$, $C = \{ n \mapsto id_{D(n)}\}$, $N = n$, $P = D(n)$\;
    \While{$N \neq \varepsilon$}{
        \Let $n = \mathtt{head}(N)$, \ie, the first instance in the queue without modifying $N$\;
        \Let $M' = \coprod_{m}\Hom{\L/p}{n}{m}$\;
        \For{$(m,e) \in M'$ s.t. $(n,m,e) \not\in E$ and $m$ is maximal}{
            \Let $E' = \coprod_{n'\neq m}\Hom{\L/p}{n'}{m}$\;
            \Let $S = \{ (n',C(n'),D(e')) \mid (n',e') \in E', n' \in N\}$\;
            \Let $(P',t,c) = \mathtt{generalizedPushout}(P,D(m),S)$\;
            $E := E \cup \{ (n',m,e') \mid (n',e') \in E' \}$\;
            $C := C \cup \{ n' \mapsto c \circ D(e') \mid (n',e') \in E', n' \not\in N  \}$\;
            $N := N \cdot \langle n' \mid (n',e') \in E', n' \not\in N\rangle$\;
            $P := P'$\;
        }
        $E := \{ (n', m, e') \in E \mid n' \neq n \}$\;
        $C := \{ n' \mapsto C(n') \in C \mid n' \neq n \}$\;
        $N := \mathtt{tail}(N)$, \ie, removes the first instance $n$ from the queue\;
    }
    \Return{P}\;
\caption{}
\label{algo:tg}
\end{algorithm}

Algorithm~\ref{algo:tg} gives a complete description of a procedure to compute $T(p)$ online.
The algorithm manages four variables $P$, $N$, $E$ and $C$.
Variable $P$ contains intermediate results and finally the output presheaf.
The part of the network that is kept in memory is represented by variables $N$ and $E$:
$N$ is a queue containing, in order of discovery, the non-maximal instances that might still have a role to play.
$E$ associates each instance in $N$ to the set of its maximal super-instances that have already been processed.
For simplicity, $E$ is not represented as a function from $N$ to sets but as a relation.
The \rhs $D(n)$ of each instance $n \in N$ is already in the current result $P$ through the morphism kept as $C(n)$.

\begin{figure}[t]
    \begin{center}
    \begin{tikzpicture}[scale=0.85]
        \node (D1) at (0, 0) {$D(n_1)$};
        \node (G1) at (3, 0) {$P_1$};
        \node (G2) at (6, 0) {$P_2$};
        \node (G3) at (9.5, 0) {$P_3$};
        \node (G3bis) at (12, 0) {$P_3$};        
        
        \node (D11) at (1, 2) {$D(n_1)$};
        \node (D12) at (4.3, 2) {$D(n_1)$};
        \node (D4)  at (4.5, 1) {$D(n_4)$};
        \node (D13) at (7.5, 2) {$D(n_1)$};
        \node (D2)  at (7.7, 1.333) {$D(n_2)$};
        \node (D5)  at (7.9, 0.667) {$D(n_5)$};
        
        \node (Dd1) at (3, 2) {$D(m_1)$};
        \node (Dd2) at (6, 2) {$D(m_2)$};
        \node (Dd3) at (9.5, 2) {$D(m_3)$};
        
        \path (Dd1) edge[->] node[left]{$c_1$} (G1);
        \path (D11) edge[->] node[above,font=\scriptsize]{$D(e')$} (Dd1);
        \path (D11) edge[->] node[left,font=\scriptsize]{$C(n_1)$} (D1);
        \path (D1)  edge[->] node[below]{$t_1$} (G1);
        
        \path (Dd2) edge[->] node[left]{$c_2$} (G2);
        \path (D12) edge[->] (Dd2);
        \path (D12) edge[->] (G1);
        \path (D4)  edge[->] (Dd2);
        \path (D4)  edge[->] (G1);
        \path (G1)  edge[->] node[below]{$t_2$} (G2);
    
        \path (Dd3) edge[->] node[left]{$c_3$} (G3);
        \path (D13) edge[->] (Dd3);
        \path (D13) edge[->] (G2);
        \path (D2)  edge[->] (Dd3);
        \path (D2)  edge[->] (G2);
        \path (D5)  edge[->] (Dd3);
        \path (D5)  edge[->] (G2);
        \path (G2)  edge[->] node[below]{$t_3$} (G3);
        \path (G3)  edge[double, double distance=.5mm] (G3bis);
        
        \node[inner sep=0.2em] (s1) at (0, 4) [draw] {$n_1$};
        
        \path (1, 2.5) edge (1, 5.5);
        \node[inner sep=0.2em] (s1) at (2.5, 4) [draw] {$n_1$};
        \node[inner sep=0.2em] (s2) at (2.5, 5) [draw] {$n_2$};
        \node[inner sep=0.2em] (s3) at (3.5, 5) [draw] {$n_3$};
        \node[inner sep=0.2em] (s4) at (3.5, 4) [draw] {$n_4$};
        \node (c1) at (3, 4.5) [circle, fill] {};
        \path (s1) edge (c1);
        \path (s2) edge (c1);
        \path (s3) edge (c1);
        \path (s4) edge (c1);
        \path (s4) edge[draw=none] node{$m_1$} (s3);
        
        \path (4.3, 2.5) edge (4.3, 5.5);
        
        \node[inner sep=0.2em] (s1) at (5.5, 4) [draw] {$n_1$};
        \node[inner sep=0.2em] (s2) at (5.5, 5) [draw] {$n_2$};
        \node[inner sep=0.2em] (s3) at (6.5, 5) [draw] {$n_3$};
        \node[inner sep=0.2em] (s4) at (6.5, 4) [draw] {$n_4$};
        \node[inner sep=0.2em] (s5) at (5.5, 3) [draw] {$n_5$};
        \node[inner sep=0.2em] (s6) at (6.5, 3) [draw] {$n_6$};
        \node (c1) at (6, 4.5) [circle, fill] {};
        \node (c2) at (6, 3.5) [circle, fill] {};
        \path (s1) edge (c1);
        \path (s2) edge (c1);
        \path (s3) edge (c1);
        \path (s4) edge (c1);
        \path (s1) edge (c2);
        \path (s4) edge (c2);
        \path (s5) edge (c2);
        \path (s6) edge (c2);
        \path (s4) edge[draw=none] node{$m_1$} (s3);
        \path (s6) edge[draw=none] node{$m_2$} (s4);
        
        \path (7.5, 2.5) edge (7.5, 5.5);
        
        \node[inner sep=0.2em] (s1) at (9.1, 4) [draw] {$n_1$};
        \node[inner sep=0.2em] (s2) at (9, 5) [draw] {$n_2$};
        \node[inner sep=0.2em] (s3) at (10, 5) [draw] {$n_3$};
        \node[inner sep=0.2em] (s4) at (10, 4) [draw] {$n_4$};
        \node[inner sep=0.2em] (s5) at (9, 3) [draw] {$n_5$};
        \node[inner sep=0.2em] (s6) at (10, 3) [draw] {$n_6$};
        \node[inner sep=0.2em] (s7) at (8, 3.5) [draw] {$n_7$};
        \node[inner sep=0.2em] (s8) at (8, 4.5) [draw] {$n_8$};
        \node (c1) at (9.5, 4.5) [circle, fill] {};
        \node (c2) at (9.5, 3.5) [circle, fill] {};
        \node (c3) at (8.5, 4) [circle, fill] {};
        \path (s1) edge (c1);
        \path (s2) edge (c1);
        \path (s3) edge (c1);
        \path (s4) edge (c1);
        \path (s1) edge (c2);
        \path (s4) edge (c2);
        \path (s5) edge (c2);
        \path (s6) edge (c2);
        \path (s1) edge (c3);
        \path (s7) edge (c3);
        \path (s8) edge (c3);
        \path (s2) edge (c3);
        \path (s5) edge (c3);
        \path (s4) edge[draw=none] node{$m_1$} (s3);
        \path (s6) edge[draw=none] node{$m_2$} (s4);
        \path (s7) edge[draw=none] node{$m_3$} (s8);
        
        \path (10.5, 2.5) edge (10.5, 5.5);
        \node[inner sep=0.2em] (s2) at (12, 5) [draw] {$n_2$};
        \node[inner sep=0.2em] (s3) at (13, 5) [draw] {$n_3$};
        \node[inner sep=0.2em] (s4) at (13, 4) [draw] {$n_4$};
        \node[inner sep=0.2em] (s5) at (12, 3) [draw] {$n_5$};
        \node[inner sep=0.2em] (s6) at (13, 3) [draw] {$n_6$};
        \node[inner sep=0.2em] (s7) at (11, 3.5) [draw] {$n_7$};
        \node[inner sep=0.2em] (s8) at (11, 4.5) [draw] {$n_8$};
        \node (c1) at (12.5, 4.5) [circle, fill] {};
        \node (c2) at (12.5, 3.5) [circle, fill] {};
        \node (c3) at (11.5, 4) [circle, fill] {};
        \path (s2) edge (c1);
        \path (s3) edge (c1);
        \path (s4) edge (c1);
        \path (s4) edge (c2);
        \path (s5) edge (c2);
        \path (s6) edge (c2);
        \path (s7) edge (c3);
        \path (s8) edge (c3);
        \path (s2) edge (c3);
        \path (s5) edge (c3);
        \path (s4) edge[draw=none] node{$m_1$} (s3);
        \path (s6) edge[draw=none] node{$m_2$} (s4);
        \path (s7) edge[draw=none] node{$m_3$} (s8);
    
    \end{tikzpicture}
    \end{center}
    \caption{Evolution of the data during the four firsts steps of the algorithm.
    From left to right: 
    we start with a non-maximal instance, process its associated maximal instances successively, and finally drop the non-maximal.
    At each stage, the output is updated by generalized pushout.
    }
    \label{fig:network}
\end{figure}
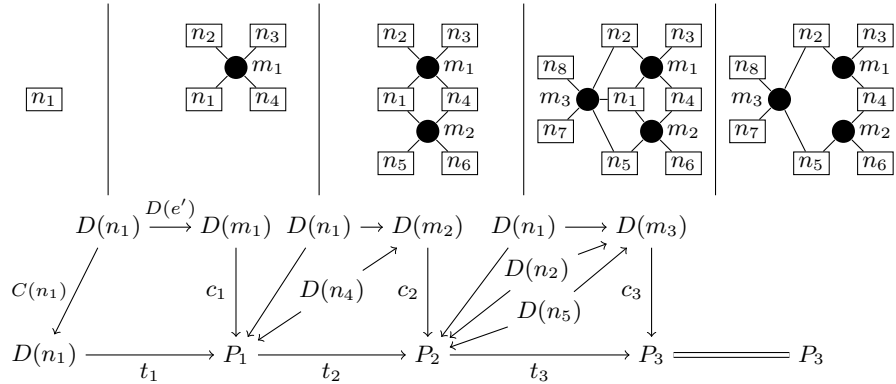

Fig.~\ref{fig:network} illustrates the first steps of Algorithm~\ref{algo:tg} representing maximal instances as black dots, and non-maximal instances as white squares.
The initialization step is to find a first instance (line 1).
For that, we try each minimal pattern, and start with the first founded minimal instance, say $n_1$.
At this point, the first intermediate result $P_0$ is simply the \rhs $D(n_1)$; we memorize the (identity) relationship between $D(n_1)$ and $P_0$, call it $C(n_1): D(n_1) \to P_0$, and enqueue $n_1$ (line 2).
Enqueued non-maximals are treated one after the other (lines 3, 4, 16).
For each, we consider all maximal super-instances of $n_1$ (lines 5, 6).
In Fig.~\ref{fig:network}, we assume three such super-instances $m_1$, $m_2$ and $m_3$.
They are processed one after the other (line 6).

The first iteration processes $m_1$ by taking all its sub-instances $n_1, \ldots, n_4$ (line 7).
The suture $S$ is computed (line 8) by considering all already computed non-maximals (\ie, in $N$) among these sub-instances.
Here, only $n_1$ is already known and serves to define a one-span suture with morphisms $C(n_1) : D(n_1) \to P_0$ and $D(e') : D(n_1) \to D(m_1)$, where $e'$ is the morphism from $n_1$ to $m_1$.
The generalized pushout of $P_0$ and the \rhs $D(m_1)$ is therefore computed and gives the new intermediate result $P_1$ (lines 9, 13).
Since $P_1$ includes the \rhs of all discovered non-maximals $N = \{ n_1, \ldots, n_4 \}$, we memorize as $C(n) : D(n) \to P_1$ for $n \in N$ the locations of these \rhs in $P_1$ (line 11).
The newly discovered non-maximal instances $n_2$, $n_3$ and $n_4$ are enqueued (line 12).

The second iteration processes $m_2$ similarly and all its sub-instances $n_1$, $n_4$, $n_5$, and $n_6$ are computed.
This time, $n_1$ and $n_4$ are used for computing the new intermediate result $P_2$ by generalized pushout using the two spans $\langle n_1, C(n_1) : D(n_1) \to P_1, D(j) : D(n_1) \to D(m_2) \rangle$ and $\langle n_4, C(n_4) : D(n_4) \to P_1, D(k) : D(n_4) \to D(m_2) \rangle$ as suture.
The set $N$ of discovered non-maximals is updated by adding $n_5$ and $n_6$ as well as the locations $C$ of their \rhs in $P_2$.

The processing of $m_3$ is similar and shows no novelty.
At this point non-maximal $n_1$ does not have any further role to play: the \rhs of all its associated maximals are already amalgamated to the current intermediate result.
$n_1$ is dropped together with all data associated to it (lines 14--16), as shown in the last step of Fig.\ref{fig:network}.
Non-maximal instances being processed in the order of first discovery, the next one is $n_2$ in the example.

During these processings, other non-maximal instances see some of their associated maximals being processed.
We have to keep track of this to avoid double processing of maximals which would cause infinite loops (condition at line 6).
This is the role of $E$ to maintain this information.
Clearly $E$ contains only the useful part of the network: edges from maximals to their sub-instances are registered when discovered (line 10) but cleared up as soon as a non-maximal is dropped (line 14).
Considering that non-maximal instances are treated in order of appearance, the algorithm will process the maximals at distance 1 from $n_1$ first, then those at distance $2$, and so on, until the complete connected component of the network is processed.
In memory, there are never stored more than four ``radius'' of instances $d$, $d+1$, $d+2$ and $d+3$ from $n_1$.


\begin{theorem}
Algorithm~\ref{algo:tg} is correct, \ie, the final value of $P$ is $\overline{T}(p)$.
\end{theorem}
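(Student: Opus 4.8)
The plan is to run a loop invariant that, after each call to $\mathtt{generalizedPushout}$ at line~9, identifies the variable $P$ with the intermediate colimit $\widetilde{T}_p(M_j)$ of Remark~\ref{rem:induction}, so that the final value of $P$ is $\widetilde{T}_p(M_k)=\overline{T}(p)$. Throughout I write $D$ for $\D_T(p)$, and I let $m_1,m_2,\dots$ be the maximal instances of $\L/p$ in the order in which their generalized pushout is computed, with $M_j=\{m_1,\dots,m_j\}$; thinness of $\L/p$ (Prop.~\ref{prop:comma_thin}) makes the morphisms $e$, $e'$ manipulated by the algorithm unique and justifies the maximal/non-maximal vocabulary.

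First I would settle the control flow. \emph{(i) Each maximal is processed exactly once:} the guard $(n,m,e)\notin E$ at line~6, together with the registration of all $(n',m,e')$ at line~10, prevents any maximal from being revisited from one of its sub-instances; conversely, since $\L/p$ is finite and connected and $N$ is a FIFO queue into which every freshly discovered non-maximal sub-instance is pushed (line~12), a breadth-first argument shows that every maximal is eventually reached. Hence at termination $M_k$ is the whole set of maximal instances, and by Prop.~\ref{prop:comma_final} the subcategory $\widetilde{M_k}$ is final in $\L/p$, so $\widetilde{T}_p(M_k)=\overline{T}(p)$. \emph{(ii) The enumeration is admissible for Remark~\ref{rem:induction}:} when $m_{j+1}$ is processed, the current head $n$ of $N$ is a non-maximal sub-instance of $m_{j+1}$ that was enqueued while processing some earlier $m_l$ with $l\le j$; thus $n$ is a non-maximal instance common to $m_{j+1}$ and $M_j$, and an immediate induction gives that every $\widetilde{M_j}$ is connected, i.e.\ $M_j\in\widetilde{\L/p}$, and that $m_{j+1}$ is connected to $\widetilde{M_j}$.

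The heart is the invariant, proved by induction on $j$: right after the $j$-th execution of lines~7--13, (a) $P=\widetilde{T}_p(M_j)$; (b) $N$ lists the non-maximal instances discovered so far (sub-instances of some $m_l$, $l\le j$, and the initial seed) that have not been dropped, and for each such $n$ one has $C(n)=\widetilde{T}_p(M_j)_n$ (modulo the in-place realisation of the successive mediating monomorphisms, see below); (c) $E$ records exactly the triples $(n,m_l,e)$ with $l\le j$, $e:n\mto m_l$, $n\neq m_l$, $n$ not dropped. For $j=1$ the set assembled at line~8 is the single span $D(m_1)\xleftarrow{D(e')}D(n_1)\xrightarrow{\mathrm{id}}D(n_1)$, whose pushout is $D(m_1)$, which equals $\widetilde{T}_p(\{m_1\})$ because $m_1$ is terminal in $\widetilde{\{m_1\}}$; this is the base case of Remark~\ref{rem:induction}. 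For the inductive step the key point is that the set $S$ built at line~8 \emph{is} the suture diagram $S_{M_j,m_{j+1}}$ of Eq.~\eqref{eq:suture}: its vertices are the $D(n')$ for $n'$ a non-maximal sub-instance of $m_{j+1}$ lying in $N$, its legs to $D(m_{j+1})$ are the $D(e')$, and by~(b) its legs to $P=\widetilde{T}_p(M_j)$ are the components $\widetilde{T}_p(M_j)_{n'}$. One checks that $\{\,n'\in N : n'\mto m_{j+1}\,\}$ is exactly the set of non-maximal sub-instances shared by $m_{j+1}$ and $M_j$: a shared $n'$ cannot have been dropped before $m_{j+1}$ is processed, since dropping $n'$ requires all its maximal super-instances — including $m_{j+1}$ — to have been processed first; and membership of $n'$ in $N$ at this point forces $n'$ to have been discovered through some $m_l$, $l\le j$. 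Prop.~\ref{prop:online_step} applied to $M_j$ and $m_{j+1}$ then yields $\mathtt{generalizedPushout}(P,D(m_{j+1}),S)=\widetilde{T}_p(M_{j+1})$ together with its cocone components, which is precisely what lines~9--13 store into $P$ and into the new $C$-entries. The older entries $C(n')$, $n'\in N$, are not rewritten, yet remain correct: since $T$ is accretive (Thm.~\ref{thm:well-defined-accretive}, available because we restrict to incremental global transformations), the mediating morphism $t:\widetilde{T}_p(M_j)\mto\widetilde{T}_p(M_{j+1})$ is a monomorphism, $\mathtt{generalizedPushout}$ realises it in place, and by the first clause of Prop.~\ref{prop:online_step} $t\circ C(n')=\widetilde{T}_p(M_{j+1})_{n'}$. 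Lines~14--16 delete from $N$, $E$, $C$ only the head $n$ whose maximal super-instances have all been processed, so $n$ is never rediscovered and the invariant is preserved.

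Termination is routine: each non-maximal is enqueued at most once (guard at line~12 plus the no-rediscovery property), $\L/p$ is finite, and each inner loop ranges over the finite set $M'$; hence the while loop halts with $N=\varepsilon$, which forces $j=k$ and $P=\widetilde{T}_p(M_k)=\overline{T}(p)$. I expect the main obstacle to be exactly this bookkeeping faithfulness — the lifetime analysis of $N$ and $E$ showing that line~8 reconstructs the full suture $S_{M_j,m_{j+1}}$ with no sharing information lost, and the in-place reuse of stale $C$-entries justified by accretiveness — everything else being a direct transcription of Prop.~\ref{prop:online_step} and Remark~\ref{rem:induction}.
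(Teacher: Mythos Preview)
Your proposal is correct and follows essentially the same route as the paper: a breadth-first control-flow argument to ensure every maximal instance is processed exactly once in an order making each $M_j$ connected, an induction showing $P_j=\widetilde{T}_p(M_j)$ with base case $P_1=D(m_1)$ and inductive step identifying the computed $S$ with the suture diagram $S_{M_j,m_{j+1}}$ via Prop.~\ref{prop:online_step}, and the appeal to accretiveness to justify the in-place reuse of stale $C$-entries. Your invariant is stated a bit more explicitly (the three clauses (a)--(c) and the lifetime argument ``dropping $n'$ requires all its maximal super-instances to have been processed first''), while the paper additionally singles out the degenerate case where $\L/p$ is a single instance; otherwise the two arguments coincide.
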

\begin{proof}
We ignore the case when $\L/p$ is composed of a single instance, since the algorithm behaves trivially in that case.

Ignoring lines 8, 9, 11, 13, and 15, variables $P$ and $C$, and looking only at non-maximal instances (variable $n$), the algorithm behaves like a usual breadth-first search.
Indeed, the search begins by enqueueing a first non-maximal instance at line 2.
Each iteration of the while loop (line 3) processes the next non-maximal instance $n$ in the queue (line 4), lists all its ``neighbors via a maximal instance'' (lines 5--7) and enqueues those that have not yet been visited (lines 10, 12) before popping $n$ out of the queue (line 16).
Variables $E$ and $N$ serve as the set of visited non-maximal instances.
The reason line 14 can remove all occurrences of $n$ in the set $E$ without creating an infinite loop is that $E$ memorizes the maximal instances $m'$ from which each enqueued non-maximal instance $n'$ has been reached (line 10).
The constraint $(n,m,e) \not\in E$ of the for loop (line 6) prevents this path to be taken in the other direction.

Since all non-maximal instances are assigned to $n$ at line 4, and each maximal instance is a super-instance of some non-maximal instance, we have that $m$ goes through all maximal instances as well (line 6).
Let us call $m_1$, \ldots, $m_k$, the successive values taken by $m$ and define the sequence of set of maximal instances $M_i = \{ m_1, \ldots, m_i\}$ for $i \in \{1,\ldots,k\}$.
The breath-first traversal ensures that each newly considered $m_{i+1}$ is connected to some maximal instance in $M_i$ by some non-maximal sub-instances.
Let us show now that the successive values taken by $P$ at line 13, numbered $P_1$, $P_2$, \ldots, $P_k$, are such that $P_i = \widetilde{T}_p(M_i)$.
Using Remark~\ref{rem:overlineT}, it is enough to show that $P_1 = D(m_1)$ and $P_{i+1} = \Colim(S_{M_{i},m_{i+1}})$.

For $P_1$, consider the first steps of the algorithm before the first execution of line 13.
Call $n_1$ the value of $n$ at line 1 and note that $P$ is assigned to $D(n_1)$ at line 2 and $C(n_1)$ to $id_{D(n_1)} : D(n_1) \to D(n_1)$.
At lines 4, 5 and 6, $n$ is assigned to $n_1$, $m$ to $m_1$ and $e$ to the corresponding morphism from $n_1$ to $m_1$.
Lines 7 and 8 lead $S$ to be $\{ (n_1, id_{D(n_1)} : D(n_1) \to D(n_1), D(e) : D(n_1) \to D(m_1)) \}$.
So the first execution of line 9 computes this simple pushout and sets $(P', t, c)$ to $(D(m_1), D(e), id_{D(m_1)})$, so $P_1 = D(m_1) = \widetilde{T}_p(M_1)$ at line 13.

To establish that $P_{i+1} = \Colim(S_{M_{i},m_{i+1}})$ in the $(i+1)$-th execution of line 13, we need to show that the parameters $(P,D(m),S)$ provided in $(i+1)$-th execution of line 9 correspond to the diagram $S_{M_{i},m_{i+1}}$ given in \eqref{eq:suture}.
Firstly, by induction hypothesis, we have that $P = P_i = \widetilde{T}_p(M_i)$ and $m = m_{i+1}$.
The collection of spans $S$ computed at line 8 is correct because $E'$ is the set of sub-instances of $m$ (line 7), and $N$ contains all sub-instances of the maximal instances in $M_i$ that could have a morphism to $m_{i+1}$.
Indeed, a non-maximal instance is discarded from $N$, $E$ and $C$ (lines 14--16) only after that all of its maximal super-instances have been processed (for loop at lines 5 to 13).
Line 11 and 15 ensure that $C$ always contain the correct morphism $D(n) \to P$ for all non-maximal instances $n$ contained in $N$.

Finally, line 11 modifies $C$ without updating the cocone compounds already stored in $C$, resulting in mixing morphisms with codomain $P$ and $P'$.
It is correct considering the following fact.
For accretive global transformations, $t$ (line 9) is always a monomorphism and can be designed for $t$ to be a trivial inclusion.
In that case, any morphism to $P$ is also a morphism to $P'$, the latter materially including $P$.
In other words, everything is implemented to ensure that the modification on intermediate results are realized \emph{in place}.
\qed
\end{proof}

\let\oldnl\nl
\newcommand{\nonl}{\renewcommand{\nl}{\let\nl\oldnl}}

\section{Conclusion}\label{sec:conclusion}

In this paper, we have presented an online algorithm for computing the application of global transformations on presheaves.
Note that this work was originally restricted to global transformations of graphs but the extension to any category of presheaves appears to be straightforward.
It is natural to expect the extensions to other well-known classes of categories, in particular for the class of ($\mathcal{M}$-)adhesive categories~\cite{lack2005adhesive,ehrig2010categorical}.

At the algorithmic level, there remain many interesting considerations that need to be settled.
One of them is that the way this algorithm goes from maximal instances to maximal instances using common sub-instances reminds of the strategy of the famous Knuth-Morris-Pratt algorithm~\cite{knuth1977fast}: in both cases the content of one match is used to guide following subsequent pattern matching.
This link is reinforced by the work of~\cite{srinivas1993sheaf} that extend the Knuth-Morris-Pratt algorithm to sheaves.
In Algorithm~\ref{algo:tg}, we used pattern matching as a black-box but opening it should allow to mix the outer maximal-to-maximal strategy with the Knuth-Morris-Pratt considerations inside the pattern matching algorithm of~\cite{srinivas1993sheaf}.
Another important aspect is the complexity of this online approach and its natural extensions.
Indeed, we described how a full input is decomposed in an online fashion, and the parts also treated online.
The full picture includes the input itself being received by part, or even treated in a distributed way.
Each of these versions deserve a careful study of their online complexity, \ie, the complexity of the computation happening between each outputted data.
We are also interested in the detailed study of the problem consisting of deciding, given a rule system, if it is a global transformation or not.
Incremental rule systems form a particularly easy sub-class for this problem but we are talking here about the complete class of all rule systems.

The incremental criterion can be studied for itself.
An alternative equivalent expression of Definition~\ref{def:incremental} is stated as follows: given a super-rule, its \rhs contains the \rhs of its sub-rules as if they were considered independently.
Intuitively, this prevents from non-local behavior like collapsing non-empty graphs to a single vertex since the empty graph remains empty for example as in Fig.~\ref{fig:figexb}.
From that point of view, incremental global transformations follow the research direction of causal graph dynamics~\cite{arrighi2018cellular}.
In this work any produced element in the output is attached to an element of the input graph and a particular attention is put on preventing two rule instances to produce a common element.

\bibliographystyle{splncs04-v2}
\bibliography{bib-v2}

\end{document}